\newcommand{\getsr}{\in_{_{\!\text{\fontsize{4}{4}\selectfont
R}}}}
\newcommand{\e}{\epsilon}
\newcommand{\ring}[1]{\mathbb{#1}}
\newcommand{\Z}{\ring{Z}}
\newcommand{\bits}{\{0,1\}}
\title{Information Equals Amortized Communication}
\author{ Mark Braverman\thanks{University of Toronto, \texttt{mbraverm@cs.toronto.edu}. Supported by an NSERC Discovery Grant.} \and  Anup Rao\thanks{University of Washington, {\tt anuprao@cs.washington.edu}. Supported by the National Science Foundation under agreement CCF-1016565.}}
\renewcommand{\e}{\varepsilon}
\newcommand{\AProt}[1]{\mathsf{AC}(#1)}
\newcommand{\CProt}[1]{\mathsf{CC}(#1)}
\newcommand{\CFunc}[1]{\mathsf{R}(#1)}
\newcommand{\IProt}[2]{\mathsf{IC^i}_{#2}(#1)}
\renewcommand{\mathbb}[1]{{\bf #1}}
\newcommand{\childa}[1]{\mathsf{child^A_{#1}}}
\newcommand{\childb}[1]{\mathsf{child^B_{#1}}}
\newcommand{\child}{\mathsf{child}}
\newcommand{\DivProt}[1]{\mathbb{D} \left (  #1 \right) }
\newcommand{\Div}[2]{\mathbb{D} \left (  #1 ||#2 \right) }
\newcommand{\eqdef}{\stackrel{def}{=}}
\newcommand{\CPJ}{\mathsf{CPJ}}
\newcommand{\papprox}[1]{\stackrel{#1}{\approx}}
\newcommand{\namedref}[2]{\hyperref[#2]{#1~\ref*{#2}}}
\newcommand{\namedref}[2]{#1~\ref{#2}}
\newcommand{\theoremref}[1]{\namedref{Theorem}{#1}}
\newcommand{\figureref}[1]{\namedref{Figure}{#1}}
\newcommand{\lemmaref}[1]{\namedref{Lemma}{#1}}
\newcommand{\propositionref}[1]{\namedref{Proposition}{#1}}
\newcommand{\corollaryref}[1]{\namedref{Corollary}{#1}}
\newcommand{\equationref}[1]{\namedref{Equation}{#1}}
\renewcommand{\P}{{\mathbf P}}
\newcommand{\cA}{{\cal A}}
\newcommand{\cU}{{\cal U}}
\newcommand{\cP}{{\cal P}}
\newcommand{\cQ}{{\cal Q}}
\newcommand{\Ex}[2]{\mathop{\mathbb{E}}\displaylimits_{#1}\left
[ #2 \right ]}
\newcommand{\Expect}[1]{\mathop{\mathbb{E}}\left
[ #1 \right ]}
\newcommand{\ve}{\varepsilon}
\newcommand{\al}{\alpha}
\newcommand{\algwidth}{0.97\textwidth}
\begin{document}

\begin{DOCheader}
\begin{abstract}

We show how to efficiently simulate the sending of a message $M$ to a receiver who has partial information about the message, so that the expected number of bits communicated in the simulation is close to the amount of additional information that the message reveals to the receiver. This is a generalization and strengthening of the Slepian-Wolf theorem, which shows how to carry out such a simulation with low \emph{amortized} communication in the case that $M$ is a deterministic function of $X$. A caveat is that our simulation is interactive. 

As a consequence, we prove that the internal information cost (namely the information revealed to the parties) involved in computing any relation or function using a two party interactive protocol is {\em exactly} equal to the amortized communication complexity of computing independent copies of the same relation or function. We also show that the only way to prove a strong direct sum theorem for randomized communication complexity is by solving a particular variant of the pointer jumping problem that we define. Our work implies that a strong direct sum theorem for communication complexity holds if and only if efficient compression of communication protocols is possible.

%


\end{abstract}
\end{DOCheader}

	\section{Introduction}
	Suppose a sender wants to transmit a message $M$ that is correlated with an input $X$ to a receiver that has some information $Y$ about $X$. What is the best way to carry out the communication in order to minimize the expected number of bits transmitted? A natural lower bound for this problem is the mutual information between the message and $X$, given $Y$: $I(M;X|Y)$, i.e. the amount of 
	new information $M$ reveals to the receiver about $X$. In this work, we give an interactive protocol that has the same effect as sending $M$, yet the expected number of bits communicated is asymptotically close to optimal --- it is the same as the amount of new information that the receiving party learns from $M$, up to a sublinear additive term\footnote{Observe that if $X,Y,M$ are arbitrary random variables, and the two parties are tasked with sampling $M$ efficiently (as opposed to one party transmitting and the other receiving), it is impossible to succeed in communication comparable to the information revealed by $M$. For example, if $M = f(X,Y)$, where $f$ is a boolean function with high communication complexity on average for $X,Y$, $M$ reveals only one bit of information about the inputs, yet cannot be cheaply sampled.}.

	Our result is a generalization of classical data compression, where $Y$ is empty (or constant), and $M$ is a deterministic function of $X$. In this case, the information learnt by the receiver is equal to the entropy $H(M)$, and the compression result above corresponds to classical results on data compression first considered by Shannon  \cite{Shannon48} --- $M$ can be encoded so that the expected number of bits required to transmit $M$ is $H(M)+1$ (see for example the text \cite{CoverT91}).
	
	Typical work in information theory usually focuses on the easier problem of communicating $n$ independent copies $M_1,\dotsc,M_n$, where each $M_i$ has an associated dependent $X_i,Y_i$. Here $n$ is viewed as a growing parameter, and the average communication is measured. Indeed, any solution simulating a single message can be applied to simulate the transmission of $n$ messages, but there is no clear way to use an asymptotically good solution to compress a single message. By the asymptotic equipartition property of the entropy function, taking independent copies essentially forces most of the probability mass of the distributions to be concentrated on sets of the ``right'' size, which simplifies this kind of  problem significantly. The Slepian-Wolf theorem \cite{SlepianW73} addresses the case when $M$ is determined by $X$. The theorem states that there is a way to encode many independent copies $M_1,\dotsc,M_n$ using roughly $I(M;X|Y)$ on average, as $n$ tends to infinity. The theorem and its proof do not immediately give any result for communicating a single message. Other work has focused on the problem of generating two correlated random variables with minimal communication \cite{Cuff08}, and understanding the minimal amount of information needed to break the dependence between $X,Y$ \cite{Wyner75}, neither of which seem useful to the problem we are interested in here.
	
Motivated by questions in computer science, prior works have considered the problem of encoding a single message where $M$ is not necessarily determined by $X$ (see \cite{JainRS03, HarshaJMR07} and the references there), but these works do not handle the case above, where the receiver has some partial information about the sender's message.

\section{Consequences in Communication Complexity}

Given a function $f(x,y)$, and a distribution $\mu$ on inputs to $f$, there are several ways to measure the complexity of a communication protocol that computes $f$.
\begin{itemize}
 \item The communication complexity $D^\mu_\rho$, namely the maximum number of bits communicated by a protocol that computes $f$ correctly except with probability $\rho$.
\item The amortized communication complexity, $\lim_{n \to \infty} D^{\mu,n}_\rho/n$, where here $D^{\mu,n}_\rho$ denotes the communication involved in the best protocol that computes $f$ on $n$ independent pairs of inputs drawn from $\mu$, getting the answer correct except with probability $\rho$ in each coordinate.
\end{itemize}

 Let $\pi(X,Y)$ denote the public randomness and messages exchanged when the protocol $\pi$ is run with inputs $X,Y$ drawn from $\mu$. Another set of measures arises when one considers exactly how much information is revealed by a protocol that computes $f$.
\begin{itemize}
	\item The minimum amount of information that must be learnt about the inputs by an observer who watches an execution of any protocol ($I(XY ; \pi(X,Y))$) that compute $f$ except with probability of failure $\rho$, called the \emph{external information cost} in \cite{BarakBCR10}.
	\item The minimum amount of new information that the parties learn about each others input by executing any protocol ($I(X;\pi(X,Y)|y)+ I(Y;\pi(X,Y)|X)$) that computes $f$ except with probability of failure $\rho$, called the \emph{internal information cost} in \cite{BarakBCR10}. In this paper we denote this quantity $\IProt{f,\rho}{\mu}$.
	\item The amortized versions of the above measures, namely the average external/internal information cost of a protocol that computes $f$ on $n$ independent inputs correctly except with probability $\rho$ in each coordinate.
\end{itemize}

Determining the exact relationship between the amortized communication complexity and the communication complexity of the function is usually referred to as the \emph{direct sum} problem, which has been the focus of much work \cite{ChakrabartiSWY01, Shaltiel03, JainRS03, HarshaJMR07, BarakBCR10, Klauck10}. For randomized and average case complexity, we know that $n$ copies must take approximately (at least) $\sqrt{n}$ times the communication of one copy, as shown by the authors with Barak and Chen \cite{BarakBCR10}. For worst case (deterministic) communication complexity, Feder, Kushilevitz, Naor, and Nisan~\cite{FederKNN95} showed that if a single copy of a function $f$ requires $C$ bits of
communication, then $n$ copies require $\Omega(\sqrt{C}n)$ bits. In the rest of the discussion in this paper, we focus on the average case and randomized communication complexity. 

The proofs of the results above for randomized communication complexity have a lot to do with the information theory based measures for the complexity of communication protocols. Chakrabarti, Shi, Wirth and Yao \cite{ChakrabartiSWY01} were the first to define the external information cost, and prove that if the inputs are independent in $\mu$, then the external information cost of $f$ is at most the amortized communication complexity of $f$. This sparked an effort to relate the amortized communication complexity to the communication complexity. If one could compress any protocol so that the communication in it is bounded by the external information cost, then, at least for product distributions $\mu$, one would show that the two measures of communication complexity are the same. 

For the case of general distributions $\mu$, it was shown in \cite{BaryossefJKS04,BarakBCR10} that the amortized communication complexity can only be larger than the internal information cost. In fact, the internal and external information costs are the same when $\mu$ is a product distribution, so the internal information cost appears to be the appropriate measure for this purpose. \cite{BarakBCR10} gave a way to compress protocols so that the communication is reduced to the geometric mean of the internal information and the communication in the protocol, which gave the direct sum result discussed above. 

	The main challenge that remains is to find a more efficient way to compress protocols whose internal information cost is small. Indeed, as we discuss below, in this paper we show that this is essentially the \emph{only} way to make progress on the direct sum question, in the sense that if  some protocol  cannot be compressed well, then it can be used to define a function whose amortized communication complexity is significantly smaller than its communication complexity.
	
	\subsection{Our Results}

	\begin{figure}[ht] 
	\begin{center} \includegraphics[angle=0,scale=.8]{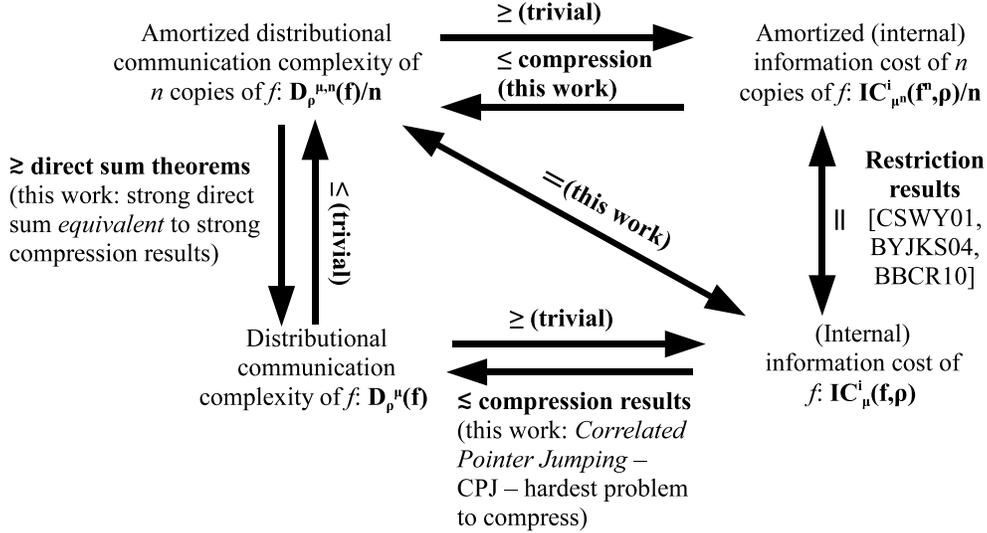}
	\caption{The relationships between different measures of complexity for communication problems, with the new results highlighted. 
	The present work collapses the upper-right triangle in the diagram, showing that amortized communication complexity is equal to the 
	internal information cost of any functionality. We further show three equivalent characterizations that would lead to a collapse in 
	the lower-left triangle: strong direct sum theorems, near-optimal protocol compression and solving the Correlated Pointer Jumping 
	efficiently.}
	\label{fig:roadmap}
	\end{center}
	\end{figure}

	Our main technical result is a way to compress one round protocols according to the internal information cost, which corresponds to the problem of efficiently communicating information when the receiver has some partial information, discussed in the introduction. In fact, we design a protocol that solves a harder problem, that we describe next. We give a way for two parties to efficiently sample from a distribution $P$ that only   one of them knows, by taking advantage of a distribution $Q$ known only to the other. We obtain a protocol whose communication complexity can be bounded in terms of the informational divergence $\Div{P}{Q} = \sum_{x} P(x) \log(P(x)/Q(x))$.

	\begin{theorem}\label{theorem:main}
	Suppose that player $A$ is given a distribution (described by the probabilities assigned to each point) $P$ and player $B$ is given a distribution $Q$ over a universe $\cU$. 
	There is a public coin protocol that uses an expected $$\Div{P}{Q}+2\log(1/\e)+O\left (\sqrt{\Div{P}{Q}}+1 \right )$$ bits of communication such that 
	at the end of the protocol:
	\begin{itemize}
	\item 
	Player $A$ outputs an element $a$ distributed according to $P$;
	\item 
	Player $B$ outputs $b$ such that for each $x \in \cU$, $\P[b=x|~a=x]>1-\e$. 
	\end{itemize}
	\end{theorem}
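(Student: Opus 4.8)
The strategy is to implement the ``sending of $M$'' as \emph{correlated rejection sampling} from public randomness, reducing the task to having $A$ tell $B$ which sample point was accepted, and then to let $B$ recover that point by scanning a family of candidate sets built from $Q$ while $A$ feeds it a hash of growing length. Interpret the public coins as an (infinite, or long enough) list of pairs $(x_1,\alpha_1),(x_2,\alpha_2),\dots$ with each $x_j$ uniform on $\cU$ and each $\alpha_j$ uniform on $[0,1]$, together with an independent uniform ``hash'' string $\beta_{j,1},\beta_{j,2},\dots$ attached to each index $j$. Player $A$ takes $j^*:=\min\{\,j:\alpha_j<P(x_j)\,\}$ and outputs $a:=x_{j^*}$. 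A standard computation shows $a$ has law exactly $P$, that $\alpha_{j^*}/P(a)$ is uniform on $[0,1]$, and that, conditioned on the value of $a$, the index $j^*$ is geometric with parameter $1/|\cU|$ and independent of $\alpha_{j^*}$; in particular the ``how far did we have to go'' statistics of $(j^*,\alpha_{j^*}/P(a))$ do not depend on which point $a$ turned out to be. What remains is for $A$ to communicate $j^*$.

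For the decoding phase, for each level $k=1,2,\dots$ set $S_k:=\{\,j:\alpha_j\le 2^kQ(x_j)\,\}$ (truncated to a suitable finite range of indices, chosen so that $j^*$ lies in that range except with probability $\ll\e$); the $S_k$ are nested, $\E|S_k|$ is $O(2^k)$ up to lower-order factors, and the key point is that $j^*\in S_k$ exactly once $k\ge\log\!\bigl(\alpha_{j^*}/Q(a)\bigr)$, a quantity that equals $d(a)+\log\bigl(\alpha_{j^*}/P(a)\bigr)$ with $d(a):=\log(P(a)/Q(a))$. The protocol runs level by level: upon reaching level $k$, player $A$ has revealed about $k+\log(1/\e)+\sqrt k$ of the bits $\beta_{j^*,\cdot}$; player $B$ intersects $S_k$ with the indices whose revealed hash bits all agree, and when this leaves a unique index $j$ it sends $A$ a short ($\approx\log(1/\e)$-bit) hash of $j$; $A$ compares with the hash of $j^*$ and replies with one bit --- ``accept and halt'' (whereupon $B$ outputs $x_j$) if they match, else ``reject'', and both parties drop $j$ and pass to level $k+1$. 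Because $j^*$ lies in $S_k\cap\{\text{consistent}\}$ once $j^*\in S_k$, the accept at level $k^*:=\lceil\log(\alpha_{j^*}/Q(a))\rceil$ is correct; and a \emph{wrong} index survives $k+\log(1/\e)+\sqrt k$ hash bits with probability $\le\E|S_k|\,2^{-(k+\log(1/\e)+\sqrt k)}=O(2^{-\sqrt k}\e)$, so (using $\sum_k 2^{-\sqrt k}=O(1)$ and the short back-hash) the total failure probability is $O(\e)$. Crucially this bound does not depend on the value of $a$ --- conditioning on $a=x$ only changes $Q(x)$, hence the level at which decoding succeeds, not the per-level error --- so, after tuning constants, we obtain $\P[b=x\mid a=x]>1-\e$ for every $x$.

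For the expected communication, decoding ends at level $k^*$, at which point $A$ has sent $k^*+\log(1/\e)+\sqrt{k^*}+O(1)$ forward hash bits, $B$ has sent one $\approx\log(1/\e)$-bit back-hash (plus, in expectation, only $O(1)$ extra bits from the rare premature singletons at levels below $k^*$, since the $O(k)$-bit cost of a level-$k$ premature round is killed by its $O(2^{-\sqrt k}\e)$ probability), and $O(1)$ confirmation bits are exchanged; the total is $k^*+2\log(1/\e)+\sqrt{k^*}+O(1)$. Taking expectations over $a\sim P$ and the public coins, $\E[k^*]=\E_{a\sim P}[d(a)]+\E[\log(\alpha_{j^*}/P(a))]+O(1)=\Div{P}{Q}+O(1)$ (using $\E[\log U]=-1/\ln 2$ for $U$ uniform on $[0,1]$), and $\E[\sqrt{k^*}]\le\sqrt{\E[k^*]}+O(1)=\sqrt{\Div{P}{Q}}+O(1)$ by Jensen, giving the claimed $\Div{P}{Q}+2\log(1/\e)+O(\sqrt{\Div{P}{Q}}+1)$.

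The main obstacle I expect is keeping the additive overhead at $O(\sqrt{\Div{P}{Q}})$ rather than incurring a constant-factor blowup or an $O(\log\Div{P}{Q})$-per-level loss: this rests on the observation that a $\sqrt k$ hash-margin at level $k$ already makes the false-positive probabilities summable, together with a careful interleaving of ``advance one level'' and ``reveal one more hash bit'' so that neither the forward hash nor the level count overshoots. Two further points need care: $A$ does \emph{not} know $Q$, so cannot compute $k^*$, and the parties must let $B$ drive termination through the propose/confirm exchange --- this is precisely where interactivity enters; and the bookkeeping for the index-range truncation and for the premature-proposal rounds must be verified to be genuinely $o(\sqrt{\Div{P}{Q}}+1)$ in expectation and uniform enough not to spoil the per-point error guarantee.
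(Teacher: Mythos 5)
Your overall architecture is the same as the paper's (shared random pairs $(x_j,\alpha_j)$, rejection sampling to pick $j^*$ so that $a\sim P$, candidate sets that are scaled-up versions of $Q$'s histogram, streamed random hash bits from $A$, and $B$ searching for a hash-consistent candidate), but there is a genuine gap in the communication accounting that your schedule of levels cannot repair. You advance the threshold one factor of $2$ per level ($S_k$ with $2^kQ$, $k=1,2,3,\dots$), so the protocol passes through $k^*\approx\log\bigl(P(a)/Q(a)\bigr)$ levels before the correct index enters the candidate set. But ``$B$ stays silent unless its surviving set is a singleton'' is not free in the two-party model: the identity of the speaker at each point is determined by the transcript, so for $B$ to be \emph{able} to interject after a given batch of $A$'s hash bits, $B$ must actually send at least one bit there on every execution (a continue/propose flag), or else $A$ has no way to know whether to keep streaming or stop. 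Hence your scheme costs $B$ at least one bit per level, i.e.\ $\Theta\bigl(\log(P(a)/Q(a))\bigr)$ backward bits, which in expectation is $\Theta(\Div{P}{Q})$ --- a constant-factor blowup that violates the claimed bound $\Div{P}{Q}+2\log(1/\e)+O(\sqrt{\Div{P}{Q}}+1)$. Your tally (``one back-hash plus $O(1)$ expected extra bits from premature singletons'') counts only the rare proposal rounds and omits this mandatory per-level signalling; the ``careful interleaving'' you allude to is exactly the missing ingredient. The paper's fix is quantitative, not cosmetic: the thresholds grow as $C_t=2^{t^2}$ with hash budget $s_t=1+\lceil\log 1/\e\rceil+(t+1)^2$, so $B$ gives feedback only $O\bigl(\sqrt{\log(P(a)/Q(a))}+1\bigr)$ times and the forward overshoot from jumping between consecutive squares is also $O\bigl(\sqrt{\log(P(a)/Q(a))}\bigr)$; both overheads then land inside the allowed $O(\sqrt{\Div{P}{Q}}+1)$ term (and no back-hash/accept step is needed: $B$ simply takes the first surviving candidate, since with probability $1-\e$ no wrong element ever survives).

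Two smaller points. First, your truncation of the index range so that $j^*$ lies in it with probability $1-o(\e)$ forces a range of about $|\cU|\log(1/\e)$ indices, which inflates $\E|S_k|$ and hence the union bound by a $\log(1/\e)$ factor; you would need an extra $\log\log(1/\e)$ hash bits to recover a clean $\e$ guarantee. The paper instead has $A$ spend $1+\lceil\log\log(1/\e)\rceil$ bits to announce the block $k=\lceil i/|\cU|\rceil$ containing $j^*$, so $B$ scans exactly $|\cU|$ indices. Second, when bounding the per-point error $\P[b=x\mid a=x]$ you must handle the fact that, conditioned on $E_{a_i}$, the earlier indices in the block are uniform on $\cA\setminus\cP$ rather than on $\cA$; the paper absorbs this with the factor-$2$ estimate $\P[a\in C_t\cdot\cQ\mid a\notin\cP]\le 2C_t/|\cU|$ (valid once $|\cU|\ge 2$). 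These are fixable; the level-schedule/feedback issue is the substantive one.
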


	As a corollary, we obtain the formulation discussed earlier. For any distribution $X,Y$ and message $M$ that is independent of $Y$ once $X$ is fixed, we can have the sender set $P$ to be the distribution of $M$ conditioned on her input $x$, and the receiver set $Q$ to be the distribution of $M$ conditioned on her input $y$. The expected divergence $\Div{P}{Q}$ turns out to be equal to the mutual information $I(M;X|Y)$. Indeed, applying \theoremref{theorem:main} to each round of communication in a multiround protocol, gives the following corollary, where setting $r=1$ gives the formulation discussed in the introduction.	The proof appears in Section \ref{subsec:compress}. 
	
	%
	%

	\begin{corollary}\label{corollary:compress}
		Let $X,Y$ be inputs to a $k$ round communication protocol $\pi$ whose internal information cost is $I$. Then for every $\e>0$, there exists a protocol $\tau$ such that at the end of the protocol, each party outputs a transcript for $\pi$. Furthermore, there is an event $G$ with $\P[G]>1-k\ve$ such that conditioned on $G$, the expected communication of $\tau$ is $I + O(\sqrt{kI}+k) + 2k \log(1/\ve)$, and both parties output the same transcript distributed exactly according to $\pi(X,Y)$.
	\end{corollary}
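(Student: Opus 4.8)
The plan is to compress $\pi$ one round at a time: in each round we invoke \theoremref{theorem:main} to replace that round's message by a near-optimal sampling subroutine, and then we add up the per-round costs using the chain rule for mutual information. Write the messages of $\pi$ as $\pi_1,\dots,\pi_k$, with (say) Alice sending $\pi_j$ for odd $j$ and Bob for even $j$. The protocol $\tau$ runs in $k$ phases; entering phase $j$ each party holds a string $\tilde\pi_{<j}$ that it believes is the transcript so far. If $j$ is odd, Alice sets $P_j$ to be the distribution of $\pi_j$ conditioned on $(x,\tilde\pi_{<j})$ and Bob sets $Q_j$ to be the distribution of $\pi_j$ conditioned on $(y,\tilde\pi_{<j})$; they run the protocol of \theoremref{theorem:main} on $(P_j,Q_j)$ with error $\varepsilon$, Alice appends her output $a$ to her string and Bob appends his output $b$ to his. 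For even $j$ the roles of the two players are exchanged. Each party outputs its final string, which is a transcript for $\pi$.

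For correctness, let $G$ be the event that in every phase the two outputs $a,b$ coincide. On $G$ both parties hold the same string entering every phase, that string is a genuine prefix of the transcript being built, and so the phase-$j$ sender samples $\pi_j$ from exactly the correct conditional distribution of $\pi_j$ given her input and the prefix; hence on $G$ the two final transcripts agree and are distributed according to $\pi(X,Y)$. Since \theoremref{theorem:main} guarantees $\P[a=b]>1-\varepsilon$ in each phase, a union bound over the $k$ phases gives $\P[G]>1-k\varepsilon$.

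For the communication, \theoremref{theorem:main} says phase $j$ costs $\E[\Div{P_j}{Q_j}]+2\log(1/\varepsilon)+O(\E[\sqrt{\Div{P_j}{Q_j}}]+1)$ in expectation. Put $I_j:=\E[\Div{P_j}{Q_j}]$. Using the standard rectangle property — conditioned on the sender's input and on $\pi_{<j}$, the message $\pi_j$ is independent of the other party's input — one identifies $I_j$ with $I(\pi_j;X\mid Y,\pi_{<j})$ for odd $j$ and with $I(\pi_j;Y\mid X,\pi_{<j})$ for even $j$. Since $I(\pi_j;X\mid Y,\pi_{<j})=0$ when Bob sends $\pi_j$ (and symmetrically), the chain rule gives $\sum_j I_j=I(\pi;X\mid Y)+I(\pi;Y\mid X)=I$. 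Concavity of the square root gives $\E[\sqrt{\Div{P_j}{Q_j}}]\le\sqrt{I_j}$, and Cauchy--Schwarz gives $\sum_j\sqrt{I_j}\le\sqrt{k\sum_j I_j}=\sqrt{kI}$; summing the three groups of terms over the $k$ phases yields expected communication $I+O(\sqrt{kI}+k)+2k\log(1/\varepsilon)$.

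The step needing the most care is the interaction between conditioning on $G$ and the two exact claims. For the transcript to be distributed \emph{exactly} as $\pi(X,Y)$ conditioned on $G$ — rather than merely $O(k\varepsilon)$-close, which is all that $\P[a=b]>1-\varepsilon$ gives directly — one wants the per-phase failure event to be independent of the message just sampled; this is arranged by a mild adjustment making the failure probability in each phase equal to exactly $\varepsilon$ regardless of the sampled message, so that $G$ becomes independent of the transcript being built. Likewise, to keep the communication bound after conditioning on $G$ one observes that an expensive run of the sampling subroutine is essentially the same event as a failing run, so conditioning on success does not inflate the expected length (and the unconditional expected communication already meets the stated bound). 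Everything else — the round-by-round composition, the identity $\sum_j I_j=I$, and the Cauchy--Schwarz estimate of the lower-order terms — is routine.
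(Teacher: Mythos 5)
Your proposal is correct and follows essentially the same route as the paper: the paper applies the one-shot sampling theorem round by round (packaged as the correlated pointer jumping protocol of Theorem~\ref{theorem:pointer}) and then equates the expected total divergence cost with the internal information cost (Lemma~\ref{lemma:infovsdiv}), finishing with concavity/Cauchy--Schwarz exactly as you do; you merely inline these two lemmas via the chain rule instead of citing them. Your closing remarks on making the per-phase failure probability independent of the sampled message are a careful (and welcome) treatment of a subtlety the paper's proof passes over silently.
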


	This compression scheme significantly clarifies the relationship between the various measures of complexity discussed in the introduction. In particular, it allows us to prove that the internal information cost of computing a function $f$ according to a fixed distribution is {\em exactly} equal to the amortized communication complexity of computing many copies of $f$. 
	\begin{theorem} 
	For any $f$, $\mu$, and $\rho$, 
	$$
	\IProt{f,\rho}{\mu} = \lim_{n\rightarrow \infty} \frac{D^{\mu,n}_\rho(f)}{n}.
	$$
	\end{theorem}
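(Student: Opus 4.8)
The plan is to prove the two inequalities separately: the ``$\le$'' direction, $\IProt{f,\rho}{\mu} \le D^{\mu,n}_\rho(f)/n$ for every $n$, is the known direct-sum lower bound for information cost and uses nothing new; the ``$\ge$'' direction, $\limsup_n D^{\mu,n}_\rho(f)/n \le \IProt{f,\rho}{\mu}$, is where \corollaryref{corollary:compress} does the work. Together the two sandwich the ratio, so the limit exists and equals $\IProt{f,\rho}{\mu}$.

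For the lower bound, fix $n$ and let $\Pi$ be an optimal protocol computing $n$ independent copies of $f$ under $\mu^n$ with per-coordinate error $\le \rho$; its internal information cost is at most its communication, which is $D^{\mu,n}_\rho(f)$. Now apply the standard coordinate-splitting argument of \cite{BaryossefJKS04, BarakBCR10}: embedding a single pair of inputs into a uniformly random coordinate $j$, and filling the other coordinates with fresh samples from $\mu$ --- using public randomness for the coordinates one party must simulate locally and private randomness for the rest --- yields single-copy protocols $\Pi_1, \dots, \Pi_n$, each computing $f$ with error $\le \rho$, whose internal information costs sum to at most that of $\Pi$. Hence some $\Pi_j$ has internal information cost $\le D^{\mu,n}_\rho(f)/n$, so $\IProt{f,\rho}{\mu} \le D^{\mu,n}_\rho(f)/n$; letting $n \to \infty$ gives this inequality.

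For the upper bound, fix $\delta > 0$ and choose a single-copy protocol $\pi$ with a bounded number of rounds $k$ and internal information cost $I \le \IProt{f,\rho}{\mu} + \delta$. Run $n$ independent copies of $\pi$ in parallel on $\mu^n$, batching the $n$ round-$i$ messages into one message; the resulting protocol $\pi^{\otimes n}$ is still a $k$-round protocol --- the number of rounds did not grow --- and, since the copies are independent, its internal information cost is exactly $nI$. Apply \corollaryref{corollary:compress} to $\pi^{\otimes n}$ with parameter $\e$: there is a protocol $\tau_n$ and an event $G$ with $\P[G] > 1 - k\e$ on which both parties output the same transcript distributed exactly as $\pi^{\otimes n}(X^n, Y^n)$, and on which the expected communication is $nI + O(\sqrt{knI} + k) + 2k\log(1/\e)$. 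From this transcript each party recovers the $n$ outputs of the copies of $\pi$, which agree with what $\pi$ would output in each coordinate, so $\tau_n$ computes $f$ on $\mu^n$ with per-coordinate error at most (the error of $\pi$) $+\, \P[\bar G]$. Dividing the communication bound by $n$ gives $I + O(\sqrt{kI/n} + k/n) + (2k/n)\log(1/\e) \to I$ as $n \to \infty$, with $k$ and $\e$ fixed; hence $\limsup_n D^{\mu,n}_{\rho'}(f)/n \le I \le \IProt{f,\rho}{\mu} + \delta$ for $\rho' = (\text{error of }\pi) + k\e$, and letting $\e,\delta \to 0$ (with a suitable continuity input, see below) closes the gap to $\rho$.

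I expect the main obstacle to be the conceptual point that lets $k$ stay bounded as $n$ grows: it suffices to compress \emph{bounded-round} protocols, and applying this to the product protocol $\pi^{\otimes n}$ keeps $k$ fixed, so the $O(\sqrt{knI})$ overhead is only $o(n)$. Beyond that, the remaining work is bookkeeping I would be careful about: (i) that the infimum defining $\IProt{f,\rho}{\mu}$ is approached by protocols with a bounded (or slowly growing, $o(n/\log n)$) number of rounds; (ii) the error accounting --- since a compression failure (event $\bar G$, probability up to $k\e$) can corrupt all coordinates at once, one either starts from protocols whose error is bounded away from $\rho$ and uses continuity of $\IProt{f,\cdot}{\mu}$ in the error parameter, or arranges the failure to be detectable and falls back to running $\pi^{\otimes n}$ honestly, choosing $\e = \e_n$ small enough that this adds only $o(1)$ to the expected per-copy communication; and (iii) passing from the expected-communication bound of \corollaryref{corollary:compress} to a worst-case bound if $D^{\mu,n}_\rho$ is defined that way, by truncating $\tau_n$ at $nI + o(n)$ bits --- legitimate because for product distributions the communication of the compression protocol concentrates within $O(\sqrt{n})$ of its mean. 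None of (i)--(iii) is deep, but together they are what upgrades the clean identity ``information cost $=$ amortized communication'' into a theorem.
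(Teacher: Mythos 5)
Your proposal is correct and follows essentially the same route as the paper: the lower bound is exactly \theoremref{theorem:realtoinfo2} (the coordinate-embedding argument), and the upper bound mirrors the paper's proof of \theoremref{thm:amortized} --- choose a near-optimal single-copy protocol with error strictly below $\rho$ (your bookkeeping item (ii) is handled there by the continuity \claimref{cl:ICcont}), run $n$ copies in parallel so the round count stays $\CProt{\pi}$, use the tensorization of information/divergence, compress, and truncate using concentration to get a worst-case bound. The one cosmetic difference is that the paper compresses via the per-path bound of \theoremref{theorem:pointer} together with the CLT applied to $\DivProt{T^n}$, which is also what your step (iii) implicitly requires, since the conditional expected-communication statement of \corollaryref{corollary:compress} by itself does not yield the claimed concentration.
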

	
	This result seems surprising to us, since it characterizes the information cost in terms of a quantity that at first seems to have no direct connection to information theory. The proof appears in Appendix \ref{subsec:infoequals}. It proves that if a function's information cost is smaller than its communication complexity, then multiple copies of the function can be computed more efficiently in parallel than sequentially. Observe that the naive sequential protocol for computing multiple copies would only give a bound on the error in each copy separately (exactly as in our definition of amortized communication complexity). The consequences to the various measures discussed earlier are summarized in Figure \ref{fig:roadmap}.

In Appendix \ref{sec:53}, we define a communication problem we call Correlated Pointer Jumping -- $\CPJ(C,I)$ -- that is parametrized by two parameters $C$ and $I$ 
such that $C\gg I$. $\CPJ(C,I)$ is designed in a way that the randomized communication complexity cost $I\le R(\CPJ(C,I))\le C$. We show that determining the worst case randomized communication complexity $R(\CPJ(C,I))$ for $I=C/n$ is equivalent (up to poly-logarithmic factors) to determining the best parameter $k(n)$ for which a direct sum theorem $R(f^n)=\Omega(k(n) \cdot R(f))$ holds. For simplicity, we 
 formulate only part of the result here.

\begin{theorem}
If $\CFunc{\CPJ(C,C/n)}=\tilde{O}(C/n)$ for all $C$, then a near optimal direct sum theorem holds: $\CFunc{f^n}=\tilde{\Omega}(n \cdot \CFunc{f})$
for {\em all} $f$. 

On the other hand, if $\CFunc{\CPJ(C,C/n)}= \Omega((C \log^a C)/n)$ for all $a>0$, then direct sum is violated by $\CPJ(C,C/n)$:
$$
\CFunc{\CPJ(C,C/n)^n} = O(C\log C) = o(n \cdot \CFunc{\CPJ(C,C/n)}/\log^a C),
$$
for all $a$. 
\end{theorem}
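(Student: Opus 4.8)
The plan is to use a single picture that drives both directions: by its definition in Appendix~\ref{sec:53}, $\CPJ(C,I)$ is the abstract problem of two parties jointly walking down a depth-$C$ protocol tree along a random path, where the mover at each level has a distribution $P_j$ over its children, the listener has a ``guess'' $Q_j$, and one is promised that the total divergence $\sum_j \Div{P_j}{Q_j}$ along the path is at most $I$. From this definition I will use two facts. \emph{Fact (a):} $\CPJ(C,I)$ has a zero-error protocol with $O(C)$ rounds, $\le C$ bits of communication, and internal information cost $\le I$ --- the parties name the sampled path level by level, and the information this reveals about a party's input is exactly the expected total divergence, which the promise bounds by $I$. \emph{Fact (b):} any protocol $\pi$ with communication $\le C$ and internal information cost $\le I$ on a distribution $\mu$ can, at the cost of an arbitrarily small constant increase $\e'$ in error, be replaced by a protocol that outputs the same transcript and communicates $\tilde O\!\bigl(\CFunc{\CPJ(O(C),O(I/\e'))}\bigr)$ bits; this is because running $\pi$ and sampling a transcript \emph{is} an instance of $\CPJ$ (tree $=$ protocol tree; $P_j,Q_j$ the conditional next-message distributions given the mover's (resp.\ the listener's) input and the history), except that the ``total divergence $\le I$'' promise holds only in expectation, so one first conditions on the probability-$\ge 1-\e'$ event (Markov) that the total divergence is at most $I/\e'$.

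For the first assertion I would argue by the standard direct-sum reduction. Fix $f$, let $\mu$ be a distribution with $D^\mu_\rho(f)=\CFunc{f}$, let $\pi$ compute $f^n$ with error $\rho$ in each coordinate and $C:=\CFunc{f^n}$ bits, and run it on $\mu^n$, where its internal information cost is $\le C$. The reduction of \cite{BarakBCR10} --- embed the real instance in a uniformly random coordinate and fill the remaining coordinates with a mixture of public and private randomness --- produces a single-copy protocol $\pi'$ for $f$ on $\mu$ with error $\rho$, communication $\le C$, and internal information cost $\le C/n$. Now apply Fact (b) with $\e'$ a small constant: since then $I/\e'=O(C/n)$, the hypothesis $\CFunc{\CPJ(C',C'/n')}=\tilde O(C'/n')$, used with $C'=O(C)$ and $n'=\Theta(n)$ so that $C'/n'=O(C/n)$, compresses $\pi'$ to a protocol for $f$ on $\mu$ of communication $\tilde O(C/n)$ and error $\rho+\e'$. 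A constant number of independent repetitions with majority vote restores error $\rho$ without changing the communication up to a constant, so $\CFunc{f}=D^\mu_\rho(f)\le\tilde O(C/n)=\tilde O(\CFunc{f^n}/n)$, i.e.\ $\CFunc{f^n}=\tilde\Omega(n\cdot\CFunc{f})$.

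For the second assertion I would take $\sigma$ to be the protocol of Fact (a) for $\CPJ(C,C/n)$, run $n$ copies of it \emph{in parallel} --- merging, in each super-round, the level-$j$ messages of all copies into a single message --- and compress the result with the round-by-round scheme of \corollaryref{corollary:compress} (one invocation of \theoremref{theorem:main} per super-round, which by additivity of divergence over product distributions handles all $n$ copies at once). The parallel protocol has $k=O(C)$ super-rounds and, since each copy contributes total divergence $\le C/n$ on every input in the promise, total divergence $\le C$ per input, so its internal information cost is $\le C$ and the compression communicates $I+O(\sqrt{kI}+k)+2k\log(1/\e)=C+O(C)+O(C\log C)=O(C\log C)$ bits in expectation (taking $\e=\rho/(2k)$), which becomes a worst-case bound over promise inputs after the usual truncate-and-abort step; conditioned on the good event it outputs the correct paths for all $n$ copies, so the error is $\le\rho$. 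Hence $\CFunc{\CPJ(C,C/n)^n}=O(C\log C)$. Finally, given the hypothesis $\CFunc{\CPJ(C,C/n)}=\Omega(C\log^a C/n)$ for all $a$: fixing $a$ and invoking the hypothesis with exponent $a+2$ gives $n\cdot\CFunc{\CPJ(C,C/n)}/\log^a C=\Omega(C\log^2 C)$, which strictly dominates the $O(C\log C)$ upper bound, so $\CFunc{\CPJ(C,C/n)^n}=o\!\bigl(n\cdot\CFunc{\CPJ(C,C/n)}/\log^a C\bigr)$ for every $a$, which is the claimed violation of direct sum by $\CPJ(C,C/n)$.

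The step I expect to be the real work is Fact (b): one must pin down the definition of $\CPJ(C,I)$ precisely enough that an \emph{arbitrary} protocol's transcript-sampling problem --- with all of its conditioning on the history and on each party's private input --- is faithfully an instance of it, and must check that the $\e'$-error from the Markov truncation, compounded with the per-round $\e$-errors of \theoremref{theorem:main} inside the compression, adds up to only a small constant (and, in the first assertion, does not interfere with the other coordinates of the direct-sum reduction). Fact (a) and the parallel-repetition bound are comparatively soft, relying only on additivity of information and divergence across independent copies together with \corollaryref{corollary:compress}; the one point requiring care there is that \corollaryref{corollary:compress} bounds \emph{expected} communication over the inputs, which must be converted to the worst-case quantity $\CFunc{\cdot}$ --- harmless here because the promise bounds the total divergence on every input, not only in expectation.
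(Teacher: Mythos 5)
Your overall strategy is the paper's: view transcript sampling as correlated pointer jumping, use additivity of divergence across independent copies (\lemmaref{lemma:divproduct}), and compress the parallel protocol with \theoremref{theorem:pointer}. For the first assertion the paper itself gives no new proof --- it quotes the reduction from \cite{BarakBCR10} --- and your sketch is the intended argument, with the caveats you flag yourself (one you do not flag: besides the Markov truncation of the divergence, the instance must also satisfy the leaf-label part of the promise in \definitionref{defn:CPJ}, i.e.\ the embedded single-copy protocol must err with probability at most $\e$ on the given input, which needs its own averaging step). The genuine gap is in your proof of the second assertion. You assert that, conditioned on the good sampling event, your protocol ``outputs the correct paths for all $n$ copies, so the error is $\le\rho$.'' But the $\CPJ$ promise only guarantees that the leaf label equals $g(F)$ with probability $1-\e$ under the correct distribution; sampling the path perfectly does not mean outputting $g(F)$. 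In your parallel protocol each copy errs with probability up to $\e$ even under perfect sampling, and since the copies are independent, the probability that all $n$ answers are simultaneously correct can be as small as $(1-\e)^n$, exponentially small in $n$ for constant $\e$. Since $\CFunc{\CPJ(C,C/n)^n}$ requires all $n$ coordinates to be computed correctly (this is how the paper reads it in \theoremref{thm:CPJ}), your protocol as described does not solve the problem.

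The paper's proof of \theoremref{thm:CPJ} fixes exactly this point: it runs $m=n\log n$ copies in parallel --- $\Theta(\log n)$ independent copies of each of the $n$ instances --- and takes a per-instance majority, so that all $n$ instances are correct except with probability $<\e$. The amplification only multiplies the divergence budget by $\log n$ (giving $m\cdot I\le C\log n$), so compressing with \theoremref{theorem:pointer} at per-round error $\Theta(\e/C)$ still yields $O(C\log(nC/\e))=O(C\log C)$, and your final asymptotics, including the derivation of $o\left(n\cdot\CFunc{\CPJ(C,C/n)}/\log^a C\right)$ from the hypothesis, go through unchanged. The remaining ingredients of your argument (the product-tree view of the $n$-fold problem, additivity of divergence, and the truncate-and-abort passage from expected to worst-case communication) match the paper; only the amplification-plus-majority step is missing, and it is necessary, not optional.
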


Finally, letting $f^n$ denote the function that computes $n$ copies of $f$ on $n$ different inputs, our protocol compression yields the following direct sum theorem:

\begin{corollary} [Direct Sum for Bounded Rounds]
Let $C$ be the communication complexity of the best protocol for computing $f$ with error $\rho$ on inputs drawn from $\mu$. Then any $r$ round protocol computing $f^n$ on the distribution $\mu^n$ with error $\rho - \ve$ must involve at least $\Omega(n(C-r \log(1/\ve)-O(\sqrt{C\cdot r})))$ communication. 
\end{corollary}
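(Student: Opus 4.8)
The plan is to combine the standard ``direct sum for information cost'' reduction with the protocol compression of \corollaryref{corollary:compress}. Fix an $r$-round protocol $\pi_n$ computing $f^n$ on $\mu^n$ with error at most $\rho-\ve$ in every coordinate, and let $C_n$ be its communication. I will use two facts about $\pi_n$: (i) its internal information cost is at most its communication, $\IProt{\pi_n}{\mu^n}\le C_n$, since each transmitted bit adds at most $1$ to $I(X;\pi_n\mid Y,R)+I(Y;\pi_n\mid X,R)$; and (ii) from $\pi_n$ one can extract an $r$-round single-copy protocol $\pi$ computing $f$ on $\mu$ with error at most $\rho-\ve$ and $\IProt{\pi}{\mu}\le \IProt{\pi_n}{\mu^n}/n$. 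Granting these, the whole argument is the chain $C_n\ge\IProt{\pi_n}{\mu^n}\ge n\cdot\IProt{\pi}{\mu}$, together with the fact, proved via compression, that $\IProt{\pi}{\mu}$ must be almost as large as $C$.

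For (ii) I would use the usual embedding: pick a coordinate $j$, place the real input $(x,y)\sim\mu$ in coordinate $j$, fill the other $n-1$ coordinates with inputs drawn from $\mu$ using an appropriate mixture of public and private randomness, simulate $\pi_n$, and read off the $j$-th output. No bits are added, so the resulting $\pi^{(j)}$ is still $r$ rounds, and its error equals the error of $\pi_n$ in coordinate $j$, hence at most $\rho-\ve$. With the off-coordinates sampled according to the right public/private split one obtains $\sum_{j=1}^{n}\IProt{\pi^{(j)}}{\mu}\le\IProt{\pi_n}{\mu^n}$, so taking the best $j$ gives $\pi:=\pi^{(j)}$ with $\IProt{\pi}{\mu}\le C_n/n$; this is exactly the internal-information direct sum of \cite{BaryossefJKS04,BarakBCR10}, which I would invoke as a black box.

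Now apply \corollaryref{corollary:compress} to $\pi$, which is $r$-round with internal information cost $I:=\IProt{\pi}{\mu}\le C_n/n$, using compression parameter $\ve'=\ve/(3r)$. This produces a protocol that, with probability at least $1-r\ve'=1-\ve/3$, outputs on both sides a transcript distributed exactly as $\pi(X,Y)$ (from which each party computes the same answer for $f$) and, conditioned on that event, has expected communication $M=I+O(\sqrt{rI}+r)+2r\log(1/\ve')=I+O(\sqrt{rI}+r+r\log(r/\ve))$. Truncating this protocol whenever its communication exceeds $3M/\ve$ (and then declaring failure) turns it into a protocol for $f$ on $\mu$ with \emph{worst-case} communication $O(M/\ve)$; by Markov's inequality the truncation fires with probability at most $\ve/3$, so the total error is at most $(\rho-\ve)+\ve/3+\ve/3<\rho$. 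By definition of $C$, any protocol computing $f$ with error below $\rho$ uses at least $C$ bits, so $O(M/\ve)\ge C$, i.e.
$$ I + O\bigl(\sqrt{rI}+r+r\log(r/\ve)\bigr)\;\ge\;\Omega(\ve)\cdot C .$$
Since $I\le C_n/n$ we may assume $I\le C$ (otherwise $C_n\ge nC$ and we are done), so $\sqrt{rI}=O(\sqrt{rC})$; rearranging then gives $I\ge\Omega(C)-O(\sqrt{rC})-O(r\log(1/\ve))$, whence $C_n\ge\IProt{\pi_n}{\mu^n}\ge nI\ge\Omega\bigl(n(C-O(\sqrt{Cr})-r\log(1/\ve))\bigr)$, which is the claimed bound. (The implicit constant here degrades with $\ve$; to avoid hiding an $\ve$-factor one re-optimizes $\ve'$ and the truncation threshold, but the argument has the same shape.)

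The step I expect to be the real obstacle is (ii) for a \emph{general}, non-product distribution $\mu$: proving $\sum_j\IProt{\pi^{(j)}}{\mu}\le\IProt{\pi_n}{\mu^n}$ hinges on choosing correctly which of the $n-1$ off-coordinates are filled with public versus private randomness, and this is precisely why the statement must be phrased in terms of internal (rather than external) information cost. Everything else — that information cost is at most communication, that the embedding preserves both the round count and the per-coordinate error, and that a protocol with small conditional expected communication can be truncated to one with small worst-case communication at a small additive cost in error — is routine, but one has to carry the $\ve$, the compression parameter $\ve/(3r)$, and the $O(\sqrt{rI})$ term carefully through each step to land on the exact form in the statement.
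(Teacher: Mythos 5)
Your overall route is the intended one: extract a single-copy protocol from the $n$-copy protocol and then compress it. The embedding you re-derive and propose to cite as a black box is exactly \theoremref{theorem:realtoinfo2} of the paper (it preserves the communication pattern, hence the round count, and gives $\IProt{\tau}{\mu}\le \IProt{\pi_n}{\mu^n}/n\le \CProt{\pi_n}/n$ via \lemmaref{lemma:infocostbound}), and the compression step is \corollaryref{corollary:compress}; the paper states this corollary without writing out the argument, and what you describe is the proof the authors have in mind.

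The one genuine gap is the conversion from expected to worst-case communication, and it cannot be waved away in your closing parenthetical. $C=D^{\mu}_{\rho}(f)$ bounds worst-case communication, while \corollaryref{corollary:compress} controls only the expected communication of the simulation conditioned on $G$. Your Markov truncation must spend part of the error budget, and the only budget available is the gap $\ve$ between $\rho-\ve$ and $\rho$: if the truncation probability is $\delta$ and the compression failure is $r\ve'$, you need $r\ve'+\delta\le\ve$, hence $\delta\le\ve$, hence worst-case communication at least $M/\delta\ge M/\ve$, and the conclusion is $M\ge\Omega(\ve)\cdot C$, i.e.\ $I\ge\Omega(\ve C)-O(r\log(r/\ve))-O(\sqrt{rI})$ and $C_n\ge\Omega\bigl(n(\ve C-r\log(r/\ve)-\sqrt{Cr})\bigr)$. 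Your displayed inequality says exactly this, but the next sentence ("rearranging then gives $I\ge\Omega(C)-\dots$") silently drops the $\ve$, and re-optimizing $\ve'$ and the threshold cannot recover it: for a single copy the communication of the compressed protocol has no concentration around its mean, so the $1/\delta$ blow-up is unavoidable in this framework (contrast \theoremref{thm:amortized}, where truncation near the mean is affordable only because the divergence cost of $n$ independent copies concentrates by the central limit theorem). So what your argument actually proves is the $\ve$-weakened bound (with, similarly, an extra additive $r\log r$ coming from taking $\ve'=\ve/(3r)$); the literal statement with an absolute constant in front of $C$ does not follow from this approach, and since the paper offers no proof of the corollary, the missing factor should be flagged as an imprecision of the statement rather than claimed to disappear by tuning parameters.
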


	\subsection{Techniques}
	The key technical contribution of our work is a sampling protocol that proves \theoremref{theorem:main}. The sampling method we show is different from the ``Correlated Sampling'' technique used in work on parallel repetition \cite{Holenstein07, Rao08a} and in the previous paper on compression \cite{BarakBCR10}. In those contexts it was guaranteed that the input distributions $P,Q$ are close in \emph{statistical distance}. In this case, the sampling can be done without any communication. In our case, all interesting inputs $P,Q$ are very far from each other in statistical distance, and not only that, but the ratios of the probabilities $P(x)/Q(x)$ may vary greatly with the choice of $x$. It is impossible to solve this problem without communication, and we believe it is unlikely that it can be solved without interaction.
	
	Indeed, our sampling method involves interaction between the parties, and for good reasons. In the case that the sample is $x$ for which $P(x)/Q(x)$ is very large, one would expect that a lot of communication is needed to sample $x$, since the second party would be surprised with this sample, while if $P(x)/Q(x)$ is small, then one would expect that a small amount of communication is sufficient. Our protocol operates in rounds, gradually increasing the number of bits that are communicated until the sample is correctly determined. 

To illustrate our construction, consider the baby case of the problem where the issue of high variance in $P(x)/Q(x)$ does not affect us. Recall that the informational divergence $\Div{P}{Q}$ is equal to $\sum_{x} P(x) \log \frac{P(x)}{Q(x)}$. Suppose $Q$ is the uniform distribution on some subset $S_Q$ of the universe $\cU$, and $P$ is the uniform distribution on some subset $S_P \subset S_Q$. Then the informational divergence $\Div{P}{Q}$ is exactly $\log(|S_Q|/|S_P|)$.

	In this case, the players use an infinite public random tape that samples an infinite sequence of elements $a_1,a_2,\dotsc$ uniformly at random from the universe $\cU$. Player $A$ then picks the first element $x$ that lies in $S_P$ to be his sample.   Next the players use the public randomness to sample a sequence of uniformly random boolean functions on the universe. $A$ then sends a stream of these functions evaluated at $x$. At each round $i$, player $B$ finds the first element $y_i$ on the tape that belongs to $S_Q$ and is consistent with the values Player $A$ has sent so far. Player $B$ uses $y_i$ as his working hypothesis for the element Player $A$ is trying to communicate.
Player $B$ lets Player $A$ know (and outputs $y_i$) if the element $y_i$ stays the same for some interval $i=[j..2 j + \log 1/\ve]$. That is, when the hypothesis for the element $x$ stops changing. For the analysis, one 
has to note that the (expected) number of elements that $B$ will have to reject before converging to $x$ is bounded in terms of $\log |S_Q|/|S_P|$ -- the divergence between $P$ and $Q$ in this case. 
	

\section{Preliminaries} \label{sec:prelims}

\paragraph{Notation.} We reserve capital letters for random variables and distributions, calligraphic letters for sets, and small letters
for elements of sets. Throughout this paper, we often use the notation $|b$ to denote conditioning on the event $B=b$.
Thus $A|b$ is shorthand for $A|B=b$. 

We use the standard notion of \emph{statistical}/\emph{total variation} distance between two distributions.

\begin{definition}
Let $D$ and $F$ be two random variables taking values in a set $\mathcal{S}$. Their \emph{statistical distance} is
    \begin{align*}
        | D-F | \eqdef \max_{\mathcal{T} \subseteq \mathcal{S}}(|\Pr[D \in \mathcal{T}] - \Pr[F \in \mathcal{T}]|) = \frac{1}{2} \sum_{s \in \mathcal{S}}|\Pr[D=s]-\Pr[F=s]|
    \end{align*}
    If $|D-F| \leq \e$ we shall say that $D$ is \emph{$\e$-close} to $F$. We shall also use the notation $D
\papprox{\e} F$ to mean $D$ is $\e$-close to $F$.
\end{definition}

\subsection{Information Theory}

\begin{definition}[Entropy] The \emph{entropy} of a random variable $X$ is $H(X) \eqdef \sum_x \Pr[X=x] \log(1/\Pr[X=x])$. The \emph{conditional entropy} $H(X|Y)$ is defined to be $\Ex{y \getsr Y}{ H(X|Y=y)}$.
\end{definition}

\begin{fact} $H(AB) = H(A) + H(B|A)$.
\end{fact}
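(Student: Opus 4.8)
The statement to prove is the chain rule for entropy: $H(AB) = H(A) + H(B|A)$, where $AB$ denotes the joint random variable $(A,B)$. The plan is to expand all three quantities directly from the definition of (conditional) entropy and match terms. First I would write $H(AB) = \sum_{a,b} \Pr[A=a, B=b]\log\bigl(1/\Pr[A=a,B=b]\bigr)$, where the sum ranges over all pairs $(a,b)$ in the (joint) support. The key algebraic move is to apply the definition of conditional probability, $\Pr[A=a,B=b] = \Pr[A=a]\cdot\Pr[B=b \mid A=a]$, inside the logarithm, so that $\log(1/\Pr[A=a,B=b]) = \log(1/\Pr[A=a]) + \log(1/\Pr[B=b\mid A=a])$.

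\textbf{Key steps.} After substituting, $H(AB)$ splits into two sums. The first sum is
\[
\sum_{a,b} \Pr[A=a,B=b]\log\bigl(1/\Pr[A=a]\bigr),
\]
and here I would sum over $b$ first: $\sum_b \Pr[A=a,B=b] = \Pr[A=a]$, which collapses this to $\sum_a \Pr[A=a]\log(1/\Pr[A=a]) = H(A)$. The second sum is
\[
\sum_{a,b}\Pr[A=a,B=b]\log\bigl(1/\Pr[B=b\mid A=a]\bigr),
\]
and here I would factor $\Pr[A=a,B=b] = \Pr[A=a]\Pr[B=b\mid A=a]$ and group by $a$, obtaining
\[
\sum_a \Pr[A=a]\Bigl(\sum_b \Pr[B=b\mid A=a]\log(1/\Pr[B=b\mid A=a])\Bigr) = \sum_a \Pr[A=a]\, H(B\mid A=a),
\]
which is exactly $\Ex{a \getsr A}{H(B\mid A=a)} = H(B\mid A)$ by the definition of conditional entropy given in the excerpt. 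Adding the two pieces gives $H(AB) = H(A) + H(B\mid A)$.

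\textbf{Main obstacle.} There is essentially no deep obstacle; this is a routine manipulation. The only points requiring minor care are (i) handling the summation ranges so that terms with $\Pr[A=a]=0$ or $\Pr[A=a,B=b]=0$ are excluded (by the usual convention $0\log(1/0)=0$, so they contribute nothing and can be dropped without changing any sum), and (ii) justifying the interchange of the order of summation in the two double sums, which is valid because all summands are nonnegative (Tonelli) and the support is at most countable. I would state these conventions once at the start and then carry out the two-line computation.
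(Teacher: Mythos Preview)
Your proof is correct and is the standard direct verification from the definitions. Note, however, that the paper does not actually prove this statement: it is recorded as a \emph{Fact} without proof, serving as a background identity from information theory. So there is no ``paper's own proof'' to compare against; your argument simply supplies the routine justification the paper omits.
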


\begin{definition}[Mutual Information]
The \emph{mutual information} between two random variables $A,B$, denoted $I(A;B)$ is defined to be the quantity $H(A)
- H(A|B) = H(B) - H(B|A)$. The \emph{conditional mutual information} $I(A;B |C)$ is $H(A|C) - H(A|BC)$.
\end{definition}

In analogy with the fact that $H(AB) = H(A) + H(B|A)$,

\begin{proposition}[Chain Rule] \label{prop:infobreak}
Let $C_1,C_2,D,B$ be random variables. Then \[I(C_1 C_2; B |D) = I(C_1;B|D) + I(C_2;B|C_1D).\]
\end{proposition}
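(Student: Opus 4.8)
The plan is to reduce the identity to two facts already available in the excerpt: the definition of conditional mutual information in terms of conditional entropy, $I(A;B\mid C) = H(A\mid C) - H(A\mid BC)$, and the chain rule for entropy, $H(AB) = H(A) + H(B\mid A)$. The only mild extra ingredient is the conditional form of the entropy chain rule, $H(AB\mid D) = H(A\mid D) + H(B\mid AD)$; this follows immediately from the unconditional version by applying $H(AB\mid D=d) = H(A\mid D=d) + H(B\mid A, D=d)$ for each fixed value $d$ and then averaging over $d \gets D$, since $H(\cdot\mid D)$ is by definition exactly this average.

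First I would expand the left-hand side by definition: $I(C_1 C_2; B\mid D) = H(C_1 C_2\mid D) - H(C_1 C_2\mid BD)$. Next I would apply the conditional entropy chain rule to each of the two terms, conditioning on $D$ in the first and on $BD$ in the second, to obtain $H(C_1 C_2\mid D) = H(C_1\mid D) + H(C_2\mid C_1 D)$ and $H(C_1 C_2\mid BD) = H(C_1\mid BD) + H(C_2\mid B C_1 D)$. Subtracting and regrouping the four resulting terms into two differences gives
\[
I(C_1 C_2; B\mid D) = \big(H(C_1\mid D) - H(C_1\mid BD)\big) + \big(H(C_2\mid C_1 D) - H(C_2\mid B C_1 D)\big).
\]
Finally I would identify the first parenthesized difference as $I(C_1;B\mid D)$ (the definition with conditioning variable $D$) and the second as $I(C_2;B\mid C_1 D)$ (the definition with conditioning variable the pair $C_1 D$), which is exactly the claimed identity.

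There is no genuine obstacle here; this is a routine unfolding of definitions, and the proof is a few lines. The single point worth a sentence of care is the passage to the conditional entropy chain rule noted above. If one wishes to avoid even that, an alternative is to use the symmetric form $I(A;B\mid C) = H(B\mid C) - H(B\mid AC)$ and expand $I(C_1 C_2; B\mid D)$ from the $B$-side instead; but the route through the entropy chain rule is the shortest and does not require invoking symmetry of mutual information.
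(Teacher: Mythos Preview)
Your argument is correct and is the standard derivation of the chain rule for mutual information from the chain rule for entropy. The paper itself does not give a proof of this proposition; it is stated without proof in the preliminaries as a well-known fact, so there is no alternative approach to compare against.
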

%
%

We also use the notion of \emph{divergence} (also known as Kullback-Leibler distance or relative entropy), which is a different way to measure the distance between two
distributions:
\begin{definition}[Divergence]
The informational divergence between two distributions is $\Div{A}{B} \eqdef \sum_x A(x) \log(A(x)/B(x))$.
\end{definition}

For example, if $B$ is the uniform distribution on $\bits^n$  then $\Div{A}{B} = n - H(A)$.
%

\begin{proposition} \label{prop:divinfo}
Let $A,B,C$ be random variables in the same probability space. For every $a$ in the support of $A$ and $c$ in the
support of $C$, let $B_a$ denote $B|A=a$ and $B_{ac}$ denote $B|A=a,C=c$. Then $I(A;B |C) = \Ex{a,c \getsr
A,C}{\Div{B_{ac}}{B_{c}}}$
\end{proposition}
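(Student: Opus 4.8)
The plan is to reduce to the unconditional case and then average over $C$. First I would establish the $C$-free version of the identity: for random variables $A,B$, $I(A;B) = \Ex{a \getsr A}{\Div{B_a}{B}}$, where $B_a = B\mid A=a$. This is a one-line expansion: writing $\Div{B_a}{B} = \sum_b \Pr[B=b\mid A=a]\log\frac{\Pr[B=b\mid A=a]}{\Pr[B=b]}$, multiplying by $\Pr[A=a]$, and summing over $a$, each summand becomes $\Pr[A=a,B=b]\log\frac{\Pr[A=a,B=b]}{\Pr[A=a]\Pr[B=b]}$, whose total is exactly the standard formula for $I(A;B) = H(B) - H(B\mid A)$.

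Next I would apply this identity inside the conditional distribution. Fix $c$ in the support of $C$ and pass to the probability space conditioned on $C=c$. In that space $A$ has distribution $A\mid C=c$, the variable "$B$ conditioned on $A=a$" is precisely $B_{ac}$, and "$B$" is precisely $B_c$, so the unconditional identity yields $I(A;B\mid C=c) = \Ex{a \getsr A\mid c}{\Div{B_{ac}}{B_c}}$, where $I(A;B\mid C=c)$ denotes mutual information computed in the conditioned space.

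Finally I would average over $c \getsr C$. From the definition $I(A;B\mid C) = H(A\mid C) - H(A\mid BC)$, together with $H(A\mid C) = \Ex{c \getsr C}{H(A\mid C=c)}$ and the fact that $H(A\mid BC) = \Ex{(b,c)\getsr (B,C)}{H(A\mid B=b,C=c)} = \Ex{c \getsr C}{\Ex{b\getsr B\mid c}{H(A\mid B=b,C=c)}}$ (the inner expectation being the conditional entropy $H(A\mid B)$ in the space conditioned on $C=c$), we get $I(A;B\mid C) = \Ex{c \getsr C}{I(A;B\mid C=c)}$. Combining with the previous step and using iterated expectation, $\Ex{c \getsr C}{\Ex{a \getsr A\mid c}{\Div{B_{ac}}{B_c}}} = \Ex{a,c \getsr A,C}{\Div{B_{ac}}{B_c}}$, which is the claim.

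There is no genuine obstacle here; the only thing to watch is the bookkeeping of the summed-out variable — making sure, e.g., that $\sum_a \Pr[A=a,B=b,C=c] = \Pr[B=b,C=c]$ and that restricting to the supports of $A$ and $C$ keeps every denominator nonzero. Alternatively one can bypass the reduction and compute directly: expand $\Ex{a,c \getsr A,C}{\Div{B_{ac}}{B_c}}$, split the logarithm, recognize $-H(B\mid AC)$ from the first piece, sum the second piece over $a$ to recognize $-H(B\mid C)$, and conclude via $I(A;B\mid C) = H(B\mid C) - H(B\mid AC)$ (the symmetric form of conditional mutual information). I would likely present the direct computation since it is shortest, but the two-step reduction makes the logical structure clearest.
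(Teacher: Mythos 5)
Your proof is correct. The paper states this proposition without proof, treating it as a standard information-theoretic identity, so there is nothing of theirs to compare against; your direct computation (multiply out the expectation, split the logarithm to recognize $H(B|C)-H(B|AC)$, and invoke the symmetric form of conditional mutual information, which follows from the same symmetry the paper notes for the unconditional case) is exactly the standard verification, and your two-step reduction via the unconditional identity $I(A;B)=\Ex{a \getsr A}{\Div{B_a}{B}}$ followed by conditioning on $C=c$ and averaging is an equally valid packaging of the same calculation.
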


\begin{lemma} \label{lemma:divproduct}
$$
\Div{P_1\times P_2}{Q_1\times Q_2} = 
\Div{P_1}{Q_1}+\Div{P_2}{Q_2}.
$$
\end{lemma}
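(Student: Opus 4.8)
\textbf{Proof plan for \lemmaref{lemma:divproduct}.} The plan is to expand both sides directly from the definition of divergence and use the factorization of product distributions together with the fact that each factor distribution sums to $1$. Write $P = P_1 \times P_2$ and $Q = Q_1 \times Q_2$ as distributions on pairs $(x_1,x_2)$, so that $P(x_1,x_2) = P_1(x_1) P_2(x_2)$ and $Q(x_1,x_2) = Q_1(x_1) Q_2(x_2)$. Then
\[
\Div{P_1\times P_2}{Q_1\times Q_2} = \sum_{x_1,x_2} P_1(x_1)P_2(x_2) \log \frac{P_1(x_1)P_2(x_2)}{Q_1(x_1)Q_2(x_2)}.
\]

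Next I would split the logarithm of the product into a sum of two logarithms, $\log \frac{P_1(x_1)}{Q_1(x_1)} + \log \frac{P_2(x_2)}{Q_2(x_2)}$, and distribute the sum over $x_1,x_2$ into two double sums. In the first double sum, the factor $\log \frac{P_1(x_1)}{Q_1(x_1)}$ does not depend on $x_2$, so summing over $x_2$ first yields $\sum_{x_2} P_2(x_2) = 1$, leaving $\sum_{x_1} P_1(x_1)\log\frac{P_1(x_1)}{Q_1(x_1)} = \Div{P_1}{Q_1}$. Symmetrically, the second double sum collapses to $\Div{P_2}{Q_2}$, giving the claimed identity.

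There is essentially no obstacle here: the only mild subtlety is bookkeeping around the support, namely that terms with $P_i(x_i)=0$ contribute $0$ by the usual convention $0\log 0 = 0$, and that $\Div{P}{Q}$ is finite (equivalently, the sums converge absolutely) exactly when $\supp(P_i)\subseteq\supp(Q_i)$ for each $i$; under that condition all rearrangements of the sums are justified, and if it fails both sides are $+\infty$. So the entire argument is the one-line computation above, and the "hard part" is merely being careful that the interchange of summation order is legitimate, which it is since all terms in the split-off pieces are of one sign on the relevant support.
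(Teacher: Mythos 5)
Your proof is correct: it is the standard one-line computation showing additivity of divergence over product distributions, with appropriate care about supports and the $0\log 0$ convention. The paper states \lemmaref{lemma:divproduct} without proof as a standard fact, so your argument matches exactly what is implicitly intended and fills in the omitted details faithfully.
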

%
%
%
%
%
%

\subsection{Communication Complexity} \label{section:communication}

Let $\mathcal{X}, \mathcal{Y}$ denote the set of possible inputs to the two players, who we name A and B.  In this
paper\footnote{The definitions we present here are equivalent to the classical definitions and are more convenient for
our proofs.}, we view a \emph{private coins protocol} for computing a function $f: \mathcal{X} \times \mathcal{Y}
\rightarrow \Z_K$ as a rooted tree with the following structure:
\begin{itemize}

\item Each non-leaf node is \emph{owned} by A or by B.
\item Each non-leaf node owned by a particular player has a set of children that are owned by the other player. Each of these children is labeled by a binary string, in such a way that this coding is prefix free: no child has a label that is a prefix of another child. 

\item Every node is associated with a function mapping $\mathcal{X}$ to distributions on children of the node and a function mapping $\mathcal{Y}$ to distributions on children of the node.

\item The leaves of the protocol are labeled by output values.

\end{itemize}

On input $x,y$, the protocol $\pi$ is executed as in \figureref{figure:pi}.

\begin{figure}[h!tb]
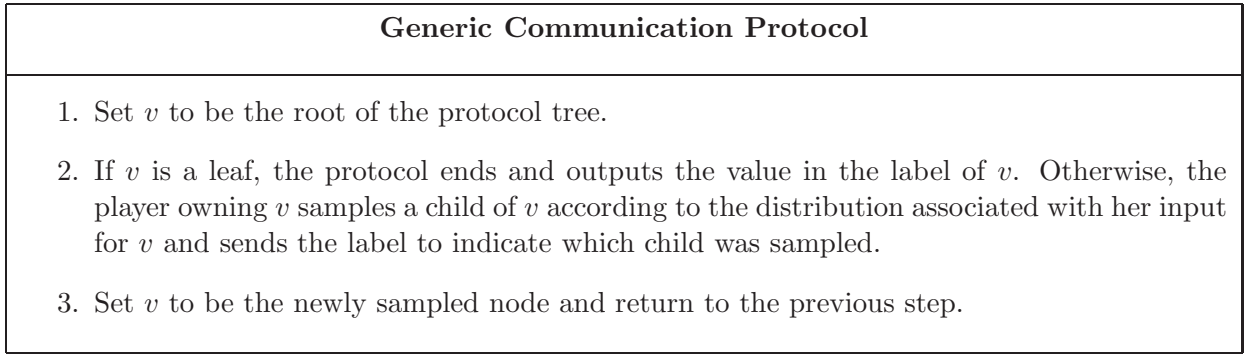

\begin{tabular}{|l|}
\hline
\begin{minipage}{\algwidth}
\vspace{1ex}
\begin{center}
\textbf{Generic Communication Protocol}
\end{center}
\vspace{0.5ex}
\end{minipage}\\
\hline
\begin{minipage}{\algwidth}
\vspace{1ex}
\begin{enumerate}
    \item Set $v$ to be the root of the protocol tree.
    \item If $v$ is a leaf, the protocol ends and outputs the value in the label of $v$. Otherwise, the player owning $v$ samples a child of $v$ according to the distribution associated with her input for $v$ and sends the label to indicate which child was sampled.
    \item Set $v$ to be the newly sampled node and return to the previous step.
\end{enumerate}
\vspace{0.3ex}
\end{minipage}\\
\hline
\end{tabular}
\caption{A communication protocol.}\label{figure:pi}
\end{figure}

A public coin protocol is a distribution on private coins protocols, run by first using shared randomness to sample an
index $r$ and then running the corresponding private coin protocol $\pi_r$. Every private coin protocol is thus a
public coin protocol. The protocol is called deterministic if all distributions labeling the nodes have support size
$1$.

\begin{definition}
The \emph{communication complexity} of a public coin protocol $\pi$, denoted $\CProt{\pi}$, is the maximum number of bits that can be transmitted in any run of the protocol.
\end{definition}

\begin{definition}
The \emph{number of rounds} of a public coin protocol is the maximum depth of the protocol tree $\pi_r$ over all choices of the public randomness. 
\end{definition}

Given a protocol $\pi$, $\pi(x,y)$ denotes the concatenation of the public randomness with all the messages that are
sent during the execution of $\pi$. We call this the \emph{transcript} of the protocol. We shall use the notation
$\pi(x,y)_j$ to refer to the $j$'th transmitted message in the protocol. We write $\pi(x,y)_{\leq j}$ to denote the
concatenation of the public randomness in the protocol with the first $j$ message bits that were transmitted in the
protocol. Given a transcript, or a prefix of the transcript, $v$, we write $\CProt{v}$ to denote the number of message
bits in $v$ (i.e.\ the length of the communication).

%

\begin{definition}[Communication Complexity notation]
For a function $f: \mathcal{X} \times \mathcal{Y} \rightarrow \Z_K$, a distribution $\mu$ supported on $\mathcal{X}
\times \mathcal{Y}$, and a parameter $\rho > 0$,  $D^\mu_\rho(f)$ denotes the communication complexity of the cheapest
deterministic protocol for computing $f$ on inputs sampled according to $\mu$ with error $\rho$. $R_\rho(f)$ denotes
the cost of the best randomized public coin protocol for computing $f$ with error at most $\rho$ on \emph{every} input.
\end{definition}

For ease of notations, we shall sometimes use the shorthand $R(f)$ to denote $R_{1/3}(f)$.

We shall use the following theorem due to Yao:

\begin{theorem}[Yao's Min-Max] $R_\rho(f) = \max_\mu D^\mu_\rho(f)$.
\end{theorem}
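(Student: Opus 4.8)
The plan is to recognize this as the standard minimax duality between randomized and distributional communication complexity, and to prove the two inequalities $R_\rho(f) \ge \max_\mu D^\mu_\rho(f)$ and $R_\rho(f) \le \max_\mu D^\mu_\rho(f)$ separately. Throughout I assume, as is implicit in the model of \sectionref{section:communication} and usual in communication complexity, that $\mathcal{X}$ and $\mathcal{Y}$ are finite. This gives two finiteness facts we will use: first, $D^\mu_\rho(f)$ is always finite, since the trivial protocol in which $A$ transmits $x$ outright has zero error and bounded cost; second, for every integer $C$ there are only finitely many deterministic protocols of communication cost at most $C$, since each such protocol is a finite tree all of whose root-to-leaf label lengths are at most $C$ and whose nodes carry functions from one player's finite input set to the finitely many children.

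For the easy direction $R_\rho(f) \ge \max_\mu D^\mu_\rho(f)$, recall that a public coin protocol is by definition a distribution over private coin protocols, hence --- after also fixing all private randomness --- a distribution $R$ over deterministic protocols, whose cost is the worst-case cost of the protocols in its support. Let $C = R_\rho(f)$, witnessed by such an $R$ of cost at most $C$ with $\Pr_{\pi\sim R}[\pi \text{ errs on } (x,y)] \le \rho$ for every input $(x,y)$. Fix any $\mu$; averaging over $(x,y)\sim\mu$ and exchanging the order of expectations gives $\mathbb{E}_{\pi\sim R}\Pr_{(x,y)\sim\mu}[\pi \text{ errs}] \le \rho$, so some $\pi$ in the support of $R$ has distributional error at most $\rho$ under $\mu$ and cost at most $C$, whence $D^\mu_\rho(f) \le C$. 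Taking the maximum over $\mu$ completes this direction.

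For the hard direction $R_\rho(f) \le \max_\mu D^\mu_\rho(f)$, set $C = \max_\mu D^\mu_\rho(f)$ and consider the finite two-player zero-sum game in which the minimizing (row) player chooses a deterministic protocol $\pi$ of cost at most $C$, the maximizing (column) player chooses an input $(x,y)$, and the payoff is $\mathbf{1}[\pi \text{ errs on } (x,y)]$. A mixed row strategy is exactly a public coin randomized protocol $R$ of cost at most $C$ (sample $\pi$ with public coins and run it), and a mixed column strategy is a distribution $\mu$ on inputs. Von Neumann's minimax theorem for finite games gives
\[
\min_R \max_\mu \ \mathbb{E}_{\pi\sim R,\ (x,y)\sim\mu}\bigl[\mathbf{1}[\pi\text{ errs on }(x,y)]\bigr] \;=\; \max_\mu \min_R \ \mathbb{E}_{\pi\sim R,\ (x,y)\sim\mu}\bigl[\mathbf{1}[\pi\text{ errs on }(x,y)]\bigr].
\]
Since the payoff is affine in each mixed strategy, on the left the inner maximum may be taken over pure inputs and on the right the inner minimum over pure protocols, so the left side equals $\min_R \max_{(x,y)} \Pr_{\pi\sim R}[\pi \text{ errs on }(x,y)]$ and the right side equals $\max_\mu \min_\pi \Pr_{(x,y)\sim\mu}[\pi\text{ errs}]$. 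By the very definition of $C$, for every $\mu$ there is a deterministic protocol of cost at most $C$ whose error under $\mu$ is at most $\rho$, so the right side is at most $\rho$; hence so is the left side, and the minimizing $R$ (attained in a finite game) is a public coin protocol of cost at most $C$ that errs with probability at most $\rho$ on every input. Thus $R_\rho(f) \le C$, and combining with the easy direction yields $R_\rho(f) = \max_\mu D^\mu_\rho(f)$.

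The only genuine obstacle is the appeal to minimax, which requires the underlying game to be finite (or the strategy sets compact and the payoff bilinear); this is exactly why the finiteness remarks at the outset matter, and why it is important that the cost bound $C$ is itself finite. The remainder is bookkeeping: unwinding ``public coin protocol'' into ``distribution over deterministic protocols,'' and using that a payoff affine in a probability vector is extremized at the vertices of the simplex.
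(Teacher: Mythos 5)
Your proof is correct: both directions are handled properly, the easy direction by averaging a cost-$R_\rho(f)$ public-coin protocol over any $\mu$ and extracting a good deterministic protocol from its support, and the hard direction by von Neumann's minimax theorem applied to the finite game over deterministic protocols of bounded cost versus inputs, with the finiteness caveats you need correctly identified. The paper itself states this result without proof, citing it as Yao's theorem; your argument is exactly the standard minimax/LP-duality proof that citation refers to, so there is no divergence of approach to report.
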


Recall that the internal information cost $\IProt{\pi}{\mu}$ of a protocol $\pi$ is defined to be $I(\pi(X,Y);X|Y) + I(\pi(X,Y);Y|X)$.

\begin{lemma}\label{lemma:publicrandomness} Let $R$ be the public randomness used in the protocol $\pi$. Then $\IProt{\pi}{\mu} = \Ex{R}{\IProt{\pi_R}{\mu}}$
\end{lemma}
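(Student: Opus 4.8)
The plan is to exploit the fact that, by definition of a public-coin protocol, the public randomness $R$ is sampled independently of the inputs $(X,Y)$, so that revealing $R$ to an observer costs nothing in terms of information about the inputs, and conditioning on $R=r$ simply turns $\pi$ into the private-coin protocol $\pi_r$ without disturbing the distribution of $(X,Y)$.

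First I would set up notation: write $\pi(X,Y) = (R, M)$, where $R$ is the public randomness and $M$ denotes the concatenation of all transmitted messages. The key structural observation is that, conditioned on the event $R=r$, the random variable $M$ is distributed exactly as the transcript $\pi_r(X,Y)$ of the private-coin protocol $\pi_r$, and $(X,Y)$ is still distributed according to $\mu$ (this last point is precisely where independence of $R$ from $(X,Y)$ enters). Hence for each $r$ in the support of $R$ we have $I(M;X\mid Y, R=r) = I(\pi_r(X,Y);X\mid Y)$ and $I(M;Y\mid X, R=r) = I(\pi_r(X,Y);Y\mid X)$, where the latter quantities are computed with inputs drawn from $\mu$; summing them is exactly $\IProt{\pi_r}{\mu}$.

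Next I would apply the chain rule, \propositionref{prop:infobreak}, to each of the two terms defining $\IProt{\pi}{\mu}$:
\[
I(RM;X\mid Y) = I(R;X\mid Y) + I(M;X\mid RY), \qquad I(RM;Y\mid X) = I(R;Y\mid X) + I(M;Y\mid RX).
\]
Since $R$ is independent of $(X,Y)$, the two ``cross'' terms vanish: $I(R;X\mid Y) = I(R;Y\mid X) = 0$. Unfolding the remaining conditional mutual informations over the values of $R$ and using the observation of the previous paragraph, $I(M;X\mid RY) = \Ex{r\gets R}{I(\pi_r(X,Y);X\mid Y)}$ and $I(M;Y\mid RX) = \Ex{r\gets R}{I(\pi_r(X,Y);Y\mid X)}$. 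Adding these gives $\IProt{\pi}{\mu} = \Ex{R}{\IProt{\pi_R}{\mu}}$, as desired.

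The proof is essentially bookkeeping rather than containing a genuine obstacle; the only points that require care are (i) justifying that conditioning on $R=r$ leaves the input distribution equal to $\mu$ and identifies $M$ with $\pi_r(X,Y)$, and (ii) the harmless observation that if one regards the constant $r$ as part of the transcript $\pi_r(X,Y)$ it contributes zero to each mutual information term, so the two formulations of $\IProt{\pi_r}{\mu}$ agree.
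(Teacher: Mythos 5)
Your proposal is correct and follows essentially the same route as the paper: apply the chain rule (\propositionref{prop:infobreak}) to split off the public randomness $R$ from the transcript, note $I(R;X\mid Y)=I(R;Y\mid X)=0$ by independence of $R$ from $(X,Y)$, and identify the remaining conditional terms with $\Ex{R}{\IProt{\pi_R}{\mu}}$. The extra care you take in justifying that conditioning on $R=r$ preserves $\mu$ and identifies the messages with $\pi_r(X,Y)$ is exactly the step the paper leaves implicit.
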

\begin{proof}
	By the chain rule (Proposition \ref{prop:infobreak}),
	\begin{align*}
	\IProt{\pi}{\mu} &= I(\pi(X,Y);X|Y) + I(\pi(X,Y);Y|X) \\
	&= I(R;X|Y) + I(R;Y|X) + I(\pi(X,Y);X|YR) + I(\pi(X,Y);Y|XR) \\
	&= I(\pi(X,Y);X|YR) + I(\pi(X,Y);Y|XR) \\
	&= \Ex{R}{\IProt{\pi_R}{\mu}}
\end{align*}
\end{proof}

A priori, one might believe that the internal information cost can be as large as twice the communication in a protocol. However, we can use the fact that each transmission only reveals information to one of the parties to bound it by the communication in the protocol:

\begin{lemma} \label{lemma:infocostbound} $\IProt{\pi}{\mu} \leq \CProt{\pi}$.
\end{lemma}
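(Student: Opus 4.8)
The plan is to decompose the internal information cost along the messages of the protocol using the chain rule, and then argue that each individual message contributes at most its bit-length to the total. Concretely, write $\pi = \pi(X,Y)$ as the concatenation of the public randomness $R$ with message bits $\pi_1, \pi_2, \dotsc, \pi_C$ where $C = \CProt{\pi}$. By \lemmaref{lemma:publicrandomness} it suffices to bound $\IProt{\pi_r}{\mu}$ for each fixed value $r$ of the public randomness, so I may as well assume $\pi$ is a private-coin protocol. Applying the chain rule (\propositionref{prop:infobreak}) to both terms, I would write
\begin{align*}
\IProt{\pi}{\mu} = \sum_{j=1}^{C} \left( I(\pi_j; X \mid Y \pi_{<j}) + I(\pi_j; Y \mid X \pi_{<j}) \right),
\end{align*}
where $\pi_{<j}$ denotes the first $j-1$ message bits. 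So it is enough to show that each summand is at most $1$.

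Now fix $j$ and condition on a particular transcript prefix $v = \pi_{<j}$. In the protocol-tree model, the node reached after seeing $v$ is owned by one of the two players, say $A$ (the argument is symmetric). The key structural observation is that the $j$-th bit $\pi_j$ is then sampled by $A$ as a deterministic (randomized) function of $A$'s input $X$ and $A$'s private coins — it does not depend on $Y$ at all once $v$ and $X$ are fixed. Hence $\pi_j$ is independent of $Y$ conditioned on $X$ and $v$, which gives $I(\pi_j; Y \mid X, \pi_{<j} = v) = 0$. For the other term, $I(\pi_j; X \mid Y, \pi_{<j} = v) = H(\pi_j \mid Y, v) - H(\pi_j \mid X, Y, v) \le H(\pi_j \mid Y, v) \le H(\pi_j) \le 1$, since $\pi_j$ is a single bit. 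Averaging over the choice of $v$ (and noting which player owns the node reached after $v$ may vary with $v$, but in every case exactly one of the two conditional mutual informations vanishes while the other is at most $1$), each summand is bounded by $1$, and summing over $j$ gives $\IProt{\pi}{\mu} \le C = \CProt{\pi}$.

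The only mildly delicate point — and the one I would be most careful about — is the bookkeeping around which party owns the node after a given prefix, since in the tree model a "message" is a label of a child and different public-randomness strings or different prefixes can lead to nodes owned by different players. This is handled by doing the whole argument bit-by-bit and prefix-by-prefix: for each fixed prefix $v$, the ownership of the node reached is determined, so exactly one of the two terms $I(\pi_j; X\mid Y v)$, $I(\pi_j; Y \mid X v)$ is zero and the other is at most the entropy of a bit. Everything else is a routine application of the chain rule and the nonnegativity of conditional entropy, so I do not anticipate any real obstacle beyond stating the model-dependent ownership fact cleanly.
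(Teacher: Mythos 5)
Your proposal is correct and follows essentially the same route as the paper's own proof: reduce to the private-coin case via \lemmaref{lemma:publicrandomness}, decompose bit-by-bit with the chain rule, and for each fixed prefix observe that the owner of the current node makes one of the two conditional mutual information terms vanish while the other is at most the entropy of a single bit.
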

\begin{proof}
First, let us assume that $\pi$ is a private coin protocol. Let $\pi_i$ denote the $i$'th bit transmitted in the protocol. Then, by the chain rule,
	\begin{align*}
	\IProt{\pi}{\mu} &= I(\pi(X,Y);X|Y) + I(\pi(X,Y);Y|X) \\
	&= \sum_{i=1}^{\CProt{\pi}} I(\pi_i;X|\pi_1\pi_2\dotsc \pi_{i-1}Y) + I(\pi_i;Y|\pi_1\pi_2\dotsc \pi_{i-1} X) \\
\end{align*}
    Given any prefix $\gamma = \pi_1\dotsc \pi_{i-1}$, let $E_{\gamma}$ denote the event that the first $i-1$ bits of the transcript are equal to $\gamma$. Then we have

	\begin{align*}
	\IProt{\pi}{\mu} &= \sum_{i=1}^{\CProt{\pi}} \Ex{\gamma \getsr \pi_1 \dotsc \pi_{i-1}}{I(\pi_i;X| E_\gamma Y) + I(\pi_i;Y| E_\gamma X)}. \\
\end{align*}

Now we claim that $I(\pi_i;X| E_\gamma Y) + I(\pi_i;Y| E_\gamma X) \leq 1$. Each of these terms is individually bounded by $1$ since $\pi_i$ contains only one bit. If $\gamma$ is such that it is the first party's turn to transmit $\pi_i$, then for every fixing of $X$, $\pi_i$ is independent of $Y$, so $I(\pi_i;Y| E_\gamma X)=0$. On the other hand, if $\gamma$ is such that it is the second party's turn to transmit $\pi_i$, then for every fixing of $Y$, $\pi_i$ is independent of $X$, so $I(\pi_i;X| E_\gamma Y)=0$. Thus,

\begin{align*}
\IProt{\pi}{\mu} &= \sum_{i=1}^{\CProt{\pi}} \Ex{\tau \getsr \pi_1 \dotsc \pi_{i-1}}{I(\pi_i;X| E_\tau Y) + I(\pi_i;Y| E_\tau X)} \leq \CProt{\pi}. \\
\end{align*}

If $\pi$ involves public randomness, then by Lemma \ref{lemma:publicrandomness}, we have that $\IProt{\pi}{\mu} = \Ex{R}{\IProt{\pi_R}{\mu}} \leq \CProt{\pi}$, where $R$ denotes the public randomness of $\pi$.

\end{proof}

A version of the following theorem was proved in \cite{BaryossefJKS04}. Here we need a slightly stronger version (alluded to in a remark in \cite{BarakBCR10}):

\begin{theorem}\label{theorem:realtoinfo} For every $\mu,f,\rho$ there exists a protocol $\tau$ computing $f$ on inputs
drawn from $\mu$ with probability of error at most $\rho$ and communication at most $\mathsf{D}^{\mu^n}_\rho(f^n)$ such
that $\IProt{\tau}{\mu} \leq \frac{ D^{\mu^n}_\rho(f^n)}{ n}$.
\end{theorem}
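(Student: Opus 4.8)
The plan is to run the standard direct-sum embedding argument of \cite{BaryossefJKS04} (see also \cite{ChakrabartiSWY01,BarakBCR10}), adapted to the \emph{internal} information cost and to a general, possibly correlated, $\mu$. Let $\pi$ be the cheapest deterministic protocol computing $f^n$ on $\mu^n$ with error at most $\rho$ in every coordinate, so $\CProt{\pi} = D^{\mu^n}_\rho(f^n) =: C$; write $\Pi$ for its (deterministic) transcript and $\mu_X,\mu_Y$ for the two marginals of $\mu$. We will turn $\pi$ into a single-copy protocol $\tau$ for $f$ on $\mu$ by planting the real input $(X,Y)$ in a uniformly random coordinate $J$ and filling the other $n-1$ coordinates with ``fake'' inputs, arranged so that (i) the induced $n$-tuple of inputs is distributed exactly as $\mu^n$, (ii) the fake coordinates cost no communication, and (iii) only the ``public part'' of the fake coordinates is ever revealed to either player, so that $\IProt{\tau}{\mu}$ is the per-coordinate average of $\IProt{\pi}{\mu^n}$.

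Here is the embedding. On input $(X,Y)\sim\mu$ the parties first use public randomness to pick $J\getsr[n]$ and to sample $\hat X_1,\dots,\hat X_{J-1}\getsr\mu_X$ and $\hat Y_{J+1},\dots,\hat Y_n\getsr\mu_Y$. Player $A$ sets $\hat X_J:=X$ and privately samples, for each $i>J$, $\hat X_i$ from $\mu$ conditioned on its $\mathcal{Y}$-part equalling $\hat Y_i$; player $B$ sets $\hat Y_J:=Y$ and privately samples, for each $i<J$, $\hat Y_i$ from $\mu$ conditioned on its $\mathcal{X}$-part equalling $\hat X_i$. Now $A$ knows all of $\hat X=(\hat X_1,\dots,\hat X_n)$ and $B$ knows all of $\hat Y$; each pair $(\hat X_i,\hat Y_i)$ is distributed as $\mu$ and the pairs are mutually independent, so $(\hat X,\hat Y)\sim\mu^n$ with $(\hat X_J,\hat Y_J)=(X,Y)$. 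The parties run $\pi$ on $(\hat X,\hat Y)$ and output the $J$-th coordinate of its output. This uses communication $\CProt{\pi}=C$, and averaging $\pi$'s per-coordinate error guarantee over the uniform $J$ shows $\tau$ errs with probability at most $\rho$.

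It remains to bound $\IProt{\tau}{\mu}$. The transcript of $\tau$ is $T=(J,\hat X_{<J},\hat Y_{>J},\Pi)$; the private samples filling the remaining coordinates are not part of it. Since $J,\hat X_{<J},\hat Y_{>J}$ are generated independently of $(X,Y)$, the chain rule (Proposition \ref{prop:infobreak}), applied with $\Pi=\pi(\hat X,\hat Y)$ for $(\hat X,\hat Y)\sim\mu^n$, gives
\[ I(T;X\mid Y) = \frac1n\sum_{j=1}^n I(\Pi;\hat X_j\mid \hat Y_j,\hat X_{<j},\hat Y_{>j}), \qquad I(T;Y\mid X) = \frac1n\sum_{j=1}^n I(\Pi;\hat Y_j\mid \hat X_j,\hat X_{<j},\hat Y_{>j}). \]
I would then use the elementary fact that if $B$ is independent of $C$ given $A$ then $I(C;D\mid A)\le I(C;D\mid A,B)$ (two applications of the chain rule). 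Because the coordinates of $\mu^n$ are independent, $\hat Y_{<j}$ is independent of $\hat X_j$ given $(\hat X_{<j},\hat Y_j,\hat Y_{>j})$, and $\hat X_{>j}$ is independent of $\hat Y_j$ given $(\hat X_{<j},\hat X_j,\hat Y_{>j})$; hence the $j$-th summands are bounded respectively by $I(\Pi;\hat X_j\mid \hat X_{<j},\hat Y^n)$ and $I(\Pi;\hat Y_j\mid \hat X^n,\hat Y_{>j})$. Summing over $j$ and applying the chain rule once more, $\sum_j I(\Pi;\hat X_j\mid \hat X_{<j},\hat Y^n)=I(\Pi;\hat X^n\mid \hat Y^n)$ and $\sum_j I(\Pi;\hat Y_j\mid \hat X^n,\hat Y_{>j})=I(\Pi;\hat Y^n\mid \hat X^n)$, so $\IProt{\tau}{\mu}\le\frac1n\bigl(I(\Pi;\hat X^n\mid \hat Y^n)+I(\Pi;\hat Y^n\mid \hat X^n)\bigr)=\frac1n\IProt{\pi}{\mu^n}\le C/n$, the last step by Lemma \ref{lemma:infocostbound}.

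The place that needs care, and the only departure from the product-distribution case, is the embedding of the second paragraph. When $\mu$ is correlated the two players cannot simply sample the fake coordinates independently, since their views must agree; splitting the fake part as ``public $\mathcal{X}$-coordinate for $i<J$, public $\mathcal{Y}$-coordinate for $i>J$'' is precisely what makes the induced distribution equal $\mu^n$, keeps the views consistent at zero communication cost, and leaves exactly the conditional independences ($\hat Y_{<j}\perp\hat X_j$ and $\hat X_{>j}\perp\hat Y_j$, after conditioning on the public part) needed to apply the monotonicity lemma on both sides. I expect checking those independences, and checking that the public coins contribute nothing to $I(T;X\mid Y)$ and $I(T;Y\mid X)$, to be the bulk of the work; everything else is bookkeeping with the chain rule.
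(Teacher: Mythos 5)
Your proposal is correct and follows essentially the same route as the paper, which proves this statement via the (stronger) Theorem~\ref{theorem:realtoinfo2}: the same embedding of the real input in a random coordinate $J$ with public $\hat X_{<J},\hat Y_{>J}$ and privately completed remaining coordinates, followed by the same chain-rule direct-sum analysis bounding $\IProt{\tau}{\mu}\le \IProt{\pi}{\mu^n}/n\le \CProt{\pi}/n$. The only (cosmetic) difference is bookkeeping: the paper adds all of $Y_1,\dots,Y_n$ to the revealed side before applying the chain rule, whereas you add the missing conditioning per coordinate via the ``independent conditioning can only increase information'' fact; the two manipulations are equivalent.
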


Since this theorem is subsumed by Theorem \ref{theorem:realtoinfo2} below, we do not give the details of its proof.

For our results on amortized communication complexity, we need the following definition: we shall consider the problem of computing $n$ copies of $f$, with error $\rho$ in each coordinate of the computation, i.e. the computation must produce the correct result in any single coordinate with probability at least $1-\rho$. We denote the communication complexity of this problem by $D^{\mu,n}_\rho(f)\le \mathsf{D}^{\mu^n}_\rho(f^n)$. Formally,

\begin{definition}
\label{def:amortized}
Let $\mu$ be a distribution on $X\times Y$ and let $0<\rho<1$. We denote by $D^{\mu,n}_\rho(f)$ the distributional complexity of computing $f$ on each of $n$ independent pairs of inputs drawn from $\mu$, with probability of failure at most $\mu$ on each of the inputs. 
\end{definition}

The result above can actually be strengthened:

\begin{theorem}\label{theorem:realtoinfo2} For every $\mu,f,\rho$, let $\pi$ be a protocol realizing $D^{\mu,n}_\rho(f)$. Then there exists a protocol $\tau$ computing $f$ on inputs
drawn from $\mu$ with probability of error at most $\rho$ such that $\CProt{\tau} = \CProt{\pi}$ and $\IProt{\tau}{\mu} \leq \frac{\IProt{\pi}{\mu^n}}{n} \leq \frac{ D^{\mu,n}_\rho(f)}{n}$.
\end{theorem}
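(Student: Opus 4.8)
The plan is to start with a protocol $\pi$ that computes $n$ copies of $f$ on $\mu^n$ with error $\rho$ in each coordinate, achieving communication $D^{\mu,n}_\rho(f)$, and extract from it a single-copy protocol $\tau$ by a standard ``embed one real copy in a random coordinate and fill the rest with public randomness'' argument. Concretely, on input $(x,y)\sim\mu$, the two parties use public randomness to sample an index $J\in[n]$ uniformly, place $(x,y)$ in coordinate $J$, and (again using shared randomness, since $\mu$ may be non-product but each coordinate is independent) sample the other $n-1$ coordinates $(X_{-J},Y_{-J})\sim\mu^{n-1}$ \emph{publicly}. They then run $\pi$ on this composite input and output the answer $\pi$ gives in coordinate $J$. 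Since $\pi$ errs with probability at most $\rho$ in each fixed coordinate, and here the ``real'' coordinate $J$ is uniformly random and independent of everything, $\tau$ errs with probability at most $\rho$ on $(x,y)\sim\mu$. The communication bound $\CProt{\tau}=\CProt{\pi}$ is immediate since $\tau$ just runs $\pi$ once (the coordinate selection and the dummy inputs cost only public randomness, which is free).

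The crux is the information bound $\IProt{\tau}{\mu}\le \IProt{\pi}{\mu^n}/n$. Here I would use the fact that once we condition on the public randomness of $\tau$ — which includes $J$ and the dummy inputs $(X_{-J},Y_{-J})$ — the internal information cost decomposes coordinate-by-coordinate. Write $\Pi=\pi(X_1Y_1,\dots,X_nY_n)$ for the transcript, with $(X_i,Y_i)\sim\mu$ independent across $i$. By \lemmaref{lemma:publicrandomness} and the chain rule, the internal information cost of $\tau$ equals the expectation over $J$ and $(X_{-J},Y_{-J})$ of $I(\Pi;X_J\mid Y_J,X_{-J}Y_{-J},J) + I(\Pi;Y_J\mid X_J,X_{-J}Y_{-J},J)$. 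Averaging over the uniform choice of $J$ turns this into $\tfrac1n\sum_{i=1}^n\big[I(\Pi;X_i\mid Y_i X_{-i}Y_{-i}) + I(\Pi;Y_i\mid X_i X_{-i}Y_{-i})\big]$. Now I apply the chain rule (\propositionref{prop:infobreak}) in the ambient space $\mu^n$: because the $(X_i,Y_i)$ are mutually independent, $\sum_i I(\Pi;X_i\mid Y_i X_{-i}Y_{-i})$ telescopes to $I(\Pi; X_1\cdots X_n \mid Y_1\cdots Y_n)$, using that $I(X_i;X_{-i}Y_{-i}\mid Y_i)=0$ to insert the conditioning at no cost; symmetrically for the $Y$ terms. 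Hence the sum equals $\IProt{\pi}{\mu^n}$, giving the factor-$1/n$ savings. Finally $\IProt{\pi}{\mu^n}\le\CProt{\pi}=D^{\mu,n}_\rho(f)$ by \lemmaref{lemma:infocostbound}, which closes the chain of inequalities.

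The step I expect to be the main obstacle — and the only place one must be careful — is the telescoping of $\sum_i I(\Pi;X_i\mid Y_iX_{-i}Y_{-i})$ into $I(\Pi;X_{\le n}\mid Y_{\le n})$. This is not literally the chain rule as stated, since the chain rule gives $I(\Pi;X_{\le n}\mid Y_{\le n})=\sum_i I(\Pi;X_i\mid X_{<i}Y_{\le n})$, with conditioning on $X_{<i}$ but not $X_{>i}$, whereas our per-coordinate terms condition on \emph{all} other coordinates $X_{-i}$. The reconciliation uses the independence structure: $I(\Pi;X_i\mid Y_iX_{-i}Y_{-i}) = I(\Pi X_{-i}Y_{-i}; X_i\mid Y_i) - I(X_{-i}Y_{-i};X_i\mid Y_i)$, and the second term vanishes by independence across coordinates, so each per-coordinate term equals $I(\Pi X_{-i}Y_{-i};X_i\mid Y_i)\ge I(\Pi X_{<i}Y_{-i};X_i\mid Y_i)\ge I(\Pi;X_i\mid X_{<i}Y_{\le n})$ after again using independence to drop/rearrange the conditioning — actually one gets equality throughout, and summing recovers exactly $I(\Pi;X_{\le n}\mid Y_{\le n})$. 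I would write this out as a short self-contained lemma about information cost on product-across-coordinates distributions, since it is the one genuinely non-formulaic manipulation; everything else (the embedding, the error bound, the communication bound) is bookkeeping.
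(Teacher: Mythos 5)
Your construction has a genuine gap, and it sits exactly at the step you yourself flagged: the claim that $\sum_{i} I(\Pi;X_i\mid Y_iX_{-i}Y_{-i})$ equals (or is at most) $I(\Pi;X_1\cdots X_n\mid Y_1\cdots Y_n)$ is false, and your reconciliation goes the wrong way. From $I(\Pi;X_i\mid Y_iX_{-i}Y_{-i})=I(\Pi X_{-i}Y_{-i};X_i\mid Y_i)$, dropping $X_{>i}$ can only \emph{decrease} mutual information, so what you actually get is $I(\Pi;X_i\mid Y_iX_{-i}Y_{-i})\ge I(\Pi;X_i\mid X_{<i}Y_1\cdots Y_n)$: your per-coordinate terms are \emph{lower} bounds on the chain-rule terms, not upper bounds, and equality fails because conditioning on the transcript can correlate $X_i$ with $X_{>i}$. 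Concretely, take $n=2$, $X_1,X_2$ independent uniform bits, $Y$ constant, and let $\pi$ be the one-bit protocol in which player A sends $X_1\oplus X_2$. Then $\IProt{\pi}{\mu^2}=I(X_1\oplus X_2;X_1X_2)=1$, while $I(\Pi;X_1\mid X_2)+I(\Pi;X_2\mid X_1)=2$. In your $\tau$, which publicly samples the unused coordinate, player B sees both the public bit $X_{-J}$ and the message $x\oplus X_{-J}$ and hence learns $x$ exactly, so $\IProt{\tau}{\mu}=1>\nicefrac{1}{2}=\IProt{\pi}{\mu^2}/2=\CProt{\pi}/2$. Thus with the symmetric embedding even the weaker bound $\IProt{\tau}{\mu}\le\CProt{\pi}/n$ can fail, and nothing rules out that an optimal protocol for $n$ copies exploits precisely this kind of cross-coordinate compression (that possibility is the whole point of the direct sum question), so the flaw is not repairable without changing the construction.

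The paper avoids this with an asymmetric (``crossed'') embedding: only $X_1,\dots,X_{J-1}$ and $Y_{J+1},\dots,Y_n$ are sampled publicly; player A \emph{privately} samples $X_{J+1},\dots,X_n$ conditioned on the corresponding public $Y$'s, and player B privately samples $Y_1,\dots,Y_{J-1}$ conditioned on the public $X$'s. After adding $Y_1\cdots Y_n$ on the transcript side (which only increases the quantity), the $j$-th contribution becomes $I(X_j;\pi\mid X_1\cdots X_{j-1}Y_1\cdots Y_n)$, and these terms sum \emph{exactly} to $I(X_1\cdots X_n;\pi\mid Y_1\cdots Y_n)$ by the chain rule, with the symmetric statement for the $Y$-terms under the reversed conditioning; no wrong-side conditioning ever appears. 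The remaining parts of your write-up (the error bound via the uniformly random coordinate $J$, the identity $\CProt{\tau}=\CProt{\pi}$, handling public randomness via \lemmaref{lemma:publicrandomness}, and closing with \lemmaref{lemma:infocostbound}) are fine and agree with the paper, but the information bound requires replacing your symmetric public sampling by this crossed one.
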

\begin{proof}

	First let us assume that $\pi$ only uses private randomness. The protocol $\tau(x,y)$ is defined as follows.
	\begin{enumerate}
		\item The parties publicly sample $J$, a uniformly random element of the set $\{1,2,3,\dotsc,n\}$.
		\item The parties publicly sample $X_1,\dotsc,X_{J-1}$ and $Y_{J+1},\dotsc,Y_n$.
		\item The first party privately samples $X_{J+1},\dotsc,X_n$ conditioned on the corresponding $Y$'s. Similarly, the second party privately samples $Y_1,\dotsc, Y_{J-1}$.
		\item The parties set $X_J = x$, $Y_J=y$, and run the protocol $\pi$ on inputs $X_1,\dotsc,X_n,Y_1,\dotsc,Y_n$. They output the result computed for the $J$'th coordinate.
	\end{enumerate}
	
	Observe that $\CProt{\tau} = \CProt{\pi}$, and the probability of making an error in $\tau$ is bounded by $\rho$. It only remains to bound $\IProt{\tau}{\mu} = I(X; \tau | Y) + I(Y; \tau | X)$. Let us bound the first term.
	\begin{align*}
	I (X; \tau |Y) &\leq I(X; \tau Y_1,\dotsc, Y_n |Y)\\
	&= I(X; J X_1\dotsc X_{J-1} Y_{1}\dotsc Y_n \pi | Y) \\
	&= I(X; J X_1\dotsc X_{J-1} Y_{1}\dotsc Y_n  | Y) + I(X_J; \pi | J X_1\dotsc X_{J-1} Y_{1}\dotsc Y_n )  \\
	&= I(X_J; \pi | J X_1\dotsc X_{J-1} Y_{1}\dotsc Y_n ) 
\end{align*}
	
	where the final equality is from the fact that $J, X_1,\dotsc,X_{J-1}, Y_{1},\dotsc,Y_n$ are all independent of $X,Y$, conditioned on every fixing of $Y$. 
	
	Expanding the expectation according to $J$, we get by the Chain Rule:
	\begin{align*}
		I (X; \tau |Y) &\leq (1/n) \sum_{j=1}^n I(X_j; \pi | X_1\dotsc X_{j-1} Y_{1}\dotsc Y_n ) \\
		&= I(X_1\dotsc X_n; \pi |Y_1 \dotsc Y_n)/n
	\end{align*}
	
	Similarly, we can bound $I(Y; \tau |X) \leq I(Y_1 \dotsc Y_n; \pi | X_1\dotsc X_{n}) / n$, and thus $\IProt{\tau}{\mu} \leq \IProt{\pi}{\mu^n}/n \leq \CProt{\pi}/n$, by Lemma \ref{lemma:infocostbound}.

	If $\pi$ uses public randomness $R$, then denote by $\tau_R$ the protocol induced for each fixing of $R$. Then $\IProt{\tau}{\mu} = \Ex{R}{\IProt{\tau_R}{\mu}} \leq \Ex{R}{\IProt{\pi_R}{\mu^n}/n } \leq \CProt{\pi}/n$.
	
\end{proof}

%

\section{Proof of Theorem \ref{theorem:main}}
%
%
%
%

We shall prove a stronger version of Theorem \ref{theorem:main}.

\begin{theorem}
\label{thm:Main}
Suppose that player $A$ is given a distribution $P$ and player $B$ is given a distribution $Q$ over a universe $\cU$. 
There is a protocol such that 
at the end of the protocol:
\begin{itemize}
\item 
player $A$ outputs an element $a$ distributed according to $P$;
\item 
player $B$ outputs an element $b$ such that for each $x$, $\P[b=a|~a=x]>1-\e$.
\item the communication in the protocol is bounded by $ \log P(a)/Q(a) + \log 1/\e + \log\log 1/\e +  5\sqrt{\log P(a)/Q(a)} + 9$.
\end{itemize}
\end{theorem}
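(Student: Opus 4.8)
The plan is to realize Theorem~\ref{thm:Main} by an interactive rejection--sampling scheme in which $A$ streams ``fingerprint'' bits of its sample while $B$, who never learns $P$, refines a guess scale by scale. From public randomness the players draw an infinite sequence of candidate pairs $(a_1,h_1),(a_2,h_2),\dots$, each $a_i$ uniform in $\cU$ and each $h_i$ uniform in $[0,1]$, together with an independent uniform fingerprint string $r_i\in\bits^{\infty}$ for every index $i$. Player $A$ puts $i^{\ast}:=\min\{\,i:h_i\le P(a_i)\,\}$ and outputs $a:=a_{i^{\ast}}$; a one--line geometric--series computation gives $\P[a=x]=P(x)$ for every $x$, which is the first bullet. (For infinite $\cU$ one first truncates to a large finite sub-universe carrying all but a negligible part of $P$ and $Q$, which I suppress.) In the rest of the protocol $A$ only reveals longer and longer prefixes of $r_{i^{\ast}}$, stopping when $B$ tells it to.

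Player $B$ keeps a scale $k=0,1,2,\dots$ and, after receiving $c$ fingerprint bits, maintains as its working hypothesis the smallest index $i\le T$ --- where $T:=O(|\cU|\log(1/\e))$ is a cap chosen so that $i^{\ast}\le T$ except with probability $\e$ --- such that $h_i\le 2^{k}Q(a_i)$ and $r_i$ agrees with the $c$ bits received. The threshold $2^{k}Q(a_i)$ is the point: once $2^{k}\ge P(a)/Q(a)$ the true index $i^{\ast}$ is itself admissible (because $h_{i^{\ast}}\le P(a)$), so $k^{\ast}:=\lceil\max(0,\log(P(a)/Q(a)))\rceil$ is the ``right'' scale. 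Fix a budget $g(k):=k+c_1\sqrt k+\log(1/\e)+\log\log(1/\e)+O(1)$, where the $\log\log(1/\e)$ term appears because at scale $k$ there are only about $2^{k}\log(1/\e)$ admissible indices below $T$. The moment the bit count reaches $g(k)$, $B$ acts: if some admissible consistent index exists it outputs $b:=a_i$ for the smallest such $i$ and halts; otherwise it silently moves to scale $k+1$. The rounds are organized so that $A$ first sends the $g(0)$ bits that are needed in any case, and thereafter reveals bits in blocks whose lengths grow like $\sqrt k$, so that $B$ answers ``continue/stop'' only $O(\sqrt{k^{\ast}})$ times and the block overshoot past $B$'s stopping point is $O(\sqrt{k^{\ast}})$.

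Two estimates then finish the proof. At scale $k^{\ast}$ the true index is admissible and always consistent, so $B$ halts no later than the moment the count reaches $g(k^{\ast})$. The only way $B$ halts with a wrong answer is that some index that could fool it --- an admissible $i<i^{\ast}$ at scale $k^{\ast}$, or any admissible $i\le T$ at some scale $k<k^{\ast}$ --- survives the $c\ge g(k)$ revealed fingerprint bits; there are $O(2^{k}\log(1/\e))$ such indices at scale $k$, and a fixed one survives with probability $2^{-c}\le 2^{-g(k)}$, so the total failure probability is $\sum_{k\ge 0}O\!\left(2^{k}\log(1/\e)\right)2^{-g(k)}=O(\e)\sum_{k\ge 0}2^{-c_1\sqrt k}=O(\e)$, and rescaling $\e$ by the constant yields the second bullet. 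For communication, $A$ sends at most $g(k^{\ast})$ fingerprint bits, to which we add $B$'s $O(\sqrt{k^{\ast}})$ feedback bits and the $O(\sqrt{k^{\ast}})$ overshoot; using $k^{\ast}\le\log(P(a)/Q(a))+1$ and fixing $c_1$ and the hidden constants suitably, this is at most $\log(P(a)/Q(a))+\log(1/\e)+\log\log(1/\e)+5\sqrt{\log(P(a)/Q(a))}+9$, which is the third bullet.

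The delicate part is the calibration of the per-scale budget $g(k)$; every other ingredient --- that $a\sim P$, that $i^{\ast}$ stays admissible at scale $k^{\ast}$, that a fixed rival dies geometrically in the fingerprint bits, that the block schedule is cheap --- is routine. The subtlety is that $B$ must spend enough bits at each scale to be confident before advancing, and yet the \emph{running total}, which has to carry all of the scales $0,\dots,k^{\ast}$ rather than merely the last one, must stay at $k^{\ast}+O(\sqrt{k^{\ast}})$ instead of something like $k^{\ast}\cdot\polylog$. This is exactly what forces the slack to grow like $\sqrt k$ per scale, so that the union bound $\sum_k 2^{-c_1\sqrt k}$ over \emph{all} scales converges to a constant; getting this slack, the index cap $T$, and the block granularity to simultaneously fit under the stated additive overhead $5\sqrt{\log P(a)/Q(a)}+\log\log(1/\e)+9$ is the real work.
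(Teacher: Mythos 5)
Your proposal is essentially the paper's own protocol: public-coin rejection sampling under the histogram of $P$ to fix $A$'s output, followed by interactive hashing in which $B$ scans candidate indices lying under geometrically scaled copies of $Q$, with per-scale hash budgets of the form $k+\Theta(\sqrt{k})+\log(1/\e)+\log\log(1/\e)$ calibrated so that the union bound over all scales converges while the total overhead stays $O(\sqrt{\log P(a)/Q(a)})$. The differences are only bookkeeping --- per-index fingerprints instead of hashes of the sampled element, a cap $T=O(|\cU|\log(1/\e))$ on the index range in place of the paper's $\log\log(1/\e)$-bit transmission of the block index $\lceil i/|\cU|\rceil$, and unit scale steps streamed in $\sqrt{k}$-sized blocks rather than the paper's direct $2^{t^2}$ scale schedule --- and, as you note, the exact constants $5$ and $9$ are asserted rather than verified, which matches the level of constant-chasing in the paper's own calculation.
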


\noindent
Note that the second condition implies in particular that  player $B$ outputs an element $b$ such that $b=a$ with probability $>1-\e$.
The protocol requires no prior knowledge or assumptions on $\Div{P}{Q}$. 

\begin{proof}
The protocol runs as follows. Both 
parties interpret the shared random tape as a sequence of uniformly selected elements $\{a_i\}_{i=1}^\infty = \{(x_i,p_i)\}_{i=1}^\infty$ from 
the set $\cA:= \cU \times [0,1]$. Denote the subset
$$
\cP := \{(x,p)~:~P(x)>p\}
$$
of $\cA$ as the set of points under the histogram of the distribution $P$. Similarly, define 
$$\cQ := \{(x,p)~:~Q(x)>p\}.$$
For a constant $C\ge 1$ we will define the $C$-multiple of $\cQ$ as
$$
C\cdot \cQ := \{(x,p)\in \cA~:~(x,p/C)\in \cQ\}.
$$

We will also use a different part of the shared random tape to obtain a sequence of 
random hash functions $h_i : \cU \rightarrow \{0,1\}$ so that for any $x\neq y\in \cU$, $\P[h_i(x)=h_i(y)]=1/2$.

\begin{figure}[ht] 
\begin{center} \includegraphics[angle=0,scale=.8]{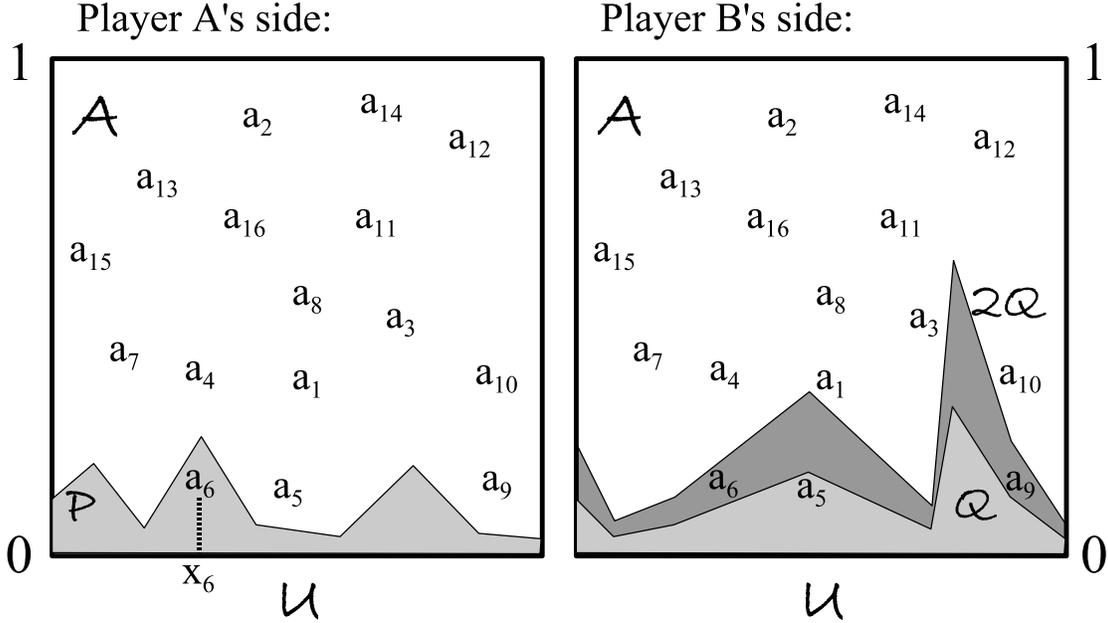}
\caption{An illustration on the execution of the protocol. The elements $a_i$ are selected uniformly
from $\cA = \cU\times [0,1]$. 
 The first $a_i$ to fall in $\cP$ is $a_6$, and 
thus player $A$ outputs $x_6$. Player $A$ sends hashes of $a_6$, which do not match the hashes of 
$a_5$, the only $a_i$ in $\cQ$. Player $B$ responds `failure', and considers surviving elements in $2\cQ$, which 
are $a_6$ and $a_9$. After a few more hashes from $A$, $a_6$ is selected by $B$ with high probability.}
\label{fig:sampling}
\end{center}
\end{figure}

We are now ready to present the protocol:

\begin{enumerate}
\item 
Player $A$ selects the first index $i$ such that $a_i=(x_i,p_i)\in \cP$, and outputs $x_i$;
\item 
Player $A$ uses $1+\lceil \log\log 1/\e \rceil$ bits to send Player $B$ the binary encoding of $k := \lceil i/|\cU|\rceil$ (if $k$ is too large, Player $A$ sends an arbitrary string);
\item 
For all $t$, set parameters $C_t:= 2^{t^2}$, $s_t = 1 + \lceil \log 1/\e \rceil + (t+1)^2$;
\item 
Repeat, until Player $B$ produces an output, beginning with iteration $t=0$: \label{st:4}
\begin{enumerate}
\item
Player $A$ sends the values of all hash functions $h_j(x_i)$ for $1 \leq j \leq s_t$, that have not previously been sent.
\item 
if there is an $a_r=(y_r,q_r)$  with $r\in\{(k-1)\cdot |\cU|+1,\ldots,k\cdot |\cU|\}$ in $C_t\cdot \cQ$ such that $h_j(y_r)=h_j(x_i)$ for $1 \leq j \leq s_t$, Player $B$ responds `success' and outputs $y_r$; if 
there is more than one such $a_r$, player $B$ selects the first one;
\item 
otherwise, Player $B$ responds `failure', and the parties increment $t$ and repeat.
\end{enumerate}
\end{enumerate}
The output of Player $A$ is distributed according to the distribution $P$, and further, the output is independent of $k$. To see this, note that the output is independent of whether or not $k> s$, for every $s$.

Fix a choice of $i$ and the pair $(x_i,p_i)$ by Player $A$. Step \ref{st:4} of the protocol is guaranteed to terminate when $t^2 \ge \log P(x_i)/Q(x_i)$ since $a_i$ belongs
to $\frac{P(x_i)}{Q(x_i)} \cdot \cQ$. Denote $T:=\left\lceil \sqrt{\log P(x_i)/Q(x_i)} \right\rceil$. By iteration $T$, Player $A$ will have sent $s_T$ bits in Step \ref{st:4}, and Player $B$ will have sent $T+1$ bits. Thus the amount of communication in Step \ref{st:4} is bounded by 
\begin{align*}
s_T + T+1 &= 1 + \lceil \log 1/\e \rceil + (T+1)^2 + T+1 \\
& \leq (\sqrt{\log P(x_i)/Q(x_i)} + 2)^2 + \sqrt{\log P(x_i)/Q(x_i)} + \lceil \log 1/\e \rceil + 3\\
& = \log P(x_i)/Q(x_i) + 5\sqrt{\log P(x_i)/Q(x_i)} +  \lceil \log 1/\e \rceil + 7,
\end{align*}
which shows that the total communication is at most $$\log P(x_i)/Q(x_i) + 5 \sqrt{\log P(x_i)/Q(x_i)} + \log 1/\e + \log\log 1/\e + 9.$$  

It only remains to show that Player $B$ outputs the same $x_i$ with probability $>1-\e$. We start with the following claim. 

\begin{claim}
\label{cl:1} 
For each $n$, $\P[k> n] < e^{-n}$.
\end{claim}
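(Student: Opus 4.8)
The plan is to prove Claim~\ref{cl:1} by showing that $k > n$ happens exactly when none of the first $n \cdot |\cU|$ sampled points $a_1, \dots, a_{n|\cU|}$ lands in $\cP$, and then bounding this probability by an independence argument. Recall that $k = \lceil i / |\cU| \rceil$ where $i$ is the first index with $a_i \in \cP$. Hence $k > n$ if and only if $i > n \cdot |\cU|$, i.e.\ $a_1, \dots, a_{n|\cU|} \notin \cP$.

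Next I would compute the probability that a single uniformly random point $a_j = (x_j, p_j) \in \cA = \cU \times [0,1]$ lands in $\cP$. Since $\cP = \{(x,p) : P(x) > p\}$ and the area (measure) of $\cP$ under the product of the counting measure on $\cU$ (normalized) and Lebesgue measure on $[0,1]$ is $\sum_{x \in \cU} P(x) \cdot \frac{1}{|\cU|} = \frac{1}{|\cU|}$, we get $\P[a_j \in \cP] = 1/|\cU|$. The $a_j$'s are independent, so
\[
\P[k > n] = \P[a_1, \dots, a_{n|\cU|} \notin \cP] = \left(1 - \frac{1}{|\cU|}\right)^{n|\cU|} < e^{-n},
\]
using the standard inequality $(1 - x)^{1/x} < e^{-1}$ for $x \in (0,1)$, applied with $x = 1/|\cU|$ and the exponent $n|\cU|$.

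There is essentially no obstacle here: the only mild subtlety is being careful that the universe $\cU$ might be infinite or that $|\cU| = 1$. If $|\cU| = 1$ then $P$ is the point mass, $\cP$ has measure $1$, and $k = 1$ always, so the bound holds trivially. If $\cU$ is countably infinite one should interpret the ``uniform'' sampling via an appropriate limiting/normalized measure, or simply note that the protocol and claim are applied in the regime where $\cU$ is finite (as implied by the use of $|\cU|$ as an integer throughout the protocol); in that case the computation above is exact. I would therefore present the argument in two lines: identify the event $\{k > n\}$ with $\{a_1, \dots, a_{n|\cU|} \notin \cP\}$, note $\P[a_j \notin \cP] = 1 - 1/|\cU|$ with the $a_j$ independent, and conclude $\P[k>n] = (1-1/|\cU|)^{n|\cU|} < e^{-n}$.
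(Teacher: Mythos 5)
Your proof is correct and follows essentially the same route as the paper: identify $\{k>n\}$ with the event that none of $a_1,\dotsc,a_{n|\cU|}$ lies in $\cP$, note each independently lands in $\cP$ with probability $1/|\cU|$, and conclude $(1-1/|\cU|)^{n|\cU|}<e^{-n}$. The extra remarks about $|\cU|=1$ and the finiteness of $\cU$ are harmless additions but not needed.
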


\begin{proof}
For each $n$, we have 
$$
\P[k>n] = \P[a_i \notin \cP \mbox{ for } i=1,\ldots,n \cdot |\cU|] = (1-1/|\cU|)^{|\cU|\cdot  n} <e^{-n}.
$$
\end{proof}

Thus the probability that the binary encoding of $k$ exceeds $1+ \lceil \log\log 1/\e \rceil$ bits is less than $e^{ -2 \cdot 2^{\lceil \log\log 1/\e \rceil}} \leq  \e/2$. It remains 
to analyze Step \ref{st:4} of the protocol. We say that an element $a=(x,p)$ {\em survives} iteration $t$ if $a\in 2^{t^2}\cdot \cQ$ and it 
satisfies $h_j(x) = h_j (x_i)$ for all $j=1,\ldots,s_t$ for this $t$.

 Note that the ``correct" element $a_i$ survives
iteration $t$ if and only if $2^{t^2} \ge P(x_i)/Q(x_i)$. 

\begin{claim}
\label{cl:2}
Let $E_{a_i}$ be the event that the element selected by player $A$ is $a_i$, which is the $i$-th element on the tape. 
Denote $k:=\lceil i/|\cU| \rceil$.
Conditioned on $E_{a_i}$, the probability that a different element $a_j$ with $j \in \{ (k-1)\cdot |\cU|+1,\ldots,k\cdot |\cU| \}$ survives iteration $t$ is bounded by $\e/2^{t+1}$.
\end{claim}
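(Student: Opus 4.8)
The plan is a union bound over the at most $|\cU|-1$ tape elements $a_j$ with $j$ in the block $\{(k-1)|\cU|+1,\dots,k|\cU|\}$ and $j\neq i$. Keeping the index range to a single block of size $|\cU|$ is exactly what makes this work: each individual element will turn out to survive iteration $t$ with probability of order $2^{t^2-s_t}/|\cU|$, so a union bound over $|\cU|$ of them loses the $1/|\cU|$ factor and leaves $2^{t^2-s_t}$, which the choice $s_t=1+\lceil\log 1/\e\rceil+(t+1)^2$ makes at most $\e/2^{t+1}$. I would also restrict attention throughout to elements with $y_j\neq x_i$: an $a_j$ with $y_j=x_i$ would make player $B$ output the correct value anyway, and strictly speaking the displayed bound is only true for the $y_j\neq x_i$ part, since an element sharing $x_i$'s $\cU$-coordinate passes every hash test automatically.

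For a fixed $j\neq i$ in the block I would factor the ``survives iteration $t$'' event into (i) $a_j\in 2^{t^2}\cdot\cQ$ and (ii) $h_{j'}(y_j)=h_{j'}(x_i)$ for all $j'\le s_t$. Since the hash functions are drawn from a part of the shared tape disjoint from the $a_i$'s, event (ii) is independent of everything determined by the $a$-tape — in particular of $E_{a_i}$, of $a_i$, and of $a_j$ — and, conditioned on $y_j\neq x_i$, has probability exactly $2^{-s_t}$, because for distinct $j'$ the pairs $(h_{j'}(y_j),h_{j'}(x_i))$ are independent with each coordinate pair agreeing with probability $1/2$. For event (i) I would condition on $E_{a_i}$ and on the value $a_i=(x_i,p_i)$: if $j>i$ then $a_j$ is still uniform on $\cA=\cU\times[0,1]$; if $j<i$, the only constraint $E_{a_i}$ places on $a_j$ is $a_j\notin\cP$, so $a_j$ is uniform on $\cA\setminus\cP$. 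In either case, for a fixed first coordinate $y_j=y$, the conditional probability that the second coordinate lands below $2^{t^2}Q(y)$ is at most $\min(1,2^{t^2}Q(y))\le 2^{t^2}Q(y)$; the only nontrivial case is $j<i$, where this follows from the elementary inequality $(c-p)^+/(1-p)\le c$ for $c,p\in[0,1)$ — i.e.\ conditioning on $a_j\notin\cP$ can only \emph{decrease} the chance of landing in $2^{t^2}\cdot\cQ$.

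Combining the two factors and summing the estimate $\min(1,2^{t^2}Q(y))\le 2^{t^2}Q(y)$ over $y\neq x_i$ (the weight on $y_j=y$ is $1/|\cU|$ in the $j>i$ case and $(1-P(y))/(|\cU|-1)$ in the $j<i$ case; after the sum, both are dominated by $\sum_y 2^{t^2}Q(y)=2^{t^2}$ divided by $|\cU|-1$) gives that each $a_j$ survives iteration $t$ with $y_j\neq x_i$ with conditional probability at most $2^{t^2-s_t}/(|\cU|-1)$. A union bound over the $|\cU|-1$ relevant indices then yields $2^{t^2-s_t}=2^{-2t-2-\lceil\log 1/\e\rceil}\le \e\cdot 2^{-2t-2}\le \e/2^{t+1}$, which is the claim.

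I expect the only genuine obstacle to be the bookkeeping around conditioning on $E_{a_i}$: one must notice that for indices $j<i$ this conditioning merely forces $a_j\notin\cP$ and that this can only help, via the one-line monotonicity inequality above, while for $j>i$ there is honest independence and nothing to check. Everything else — the factoring into hash-collision times $\cQ$-membership, the $2^{-s_t}$ hash bound, and substituting the definition of $s_t$ — is routine.
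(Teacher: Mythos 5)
Your proposal is correct and follows essentially the same route as the paper's proof: condition on $E_{a_i}$ (so the in-block elements before $a_i$ are uniform on $\cA\setminus\cP$ and those after are uniform on $\cA$), bound each element's survival probability by its chance of lying in $2^{t^2}\cdot\cQ$ times the hash-collision probability $2^{-s_t}$, and union bound over the $|\cU|$-size block before substituting $s_t$. Your two refinements --- replacing the paper's factor-$2$ bound $\P[a\in C_t\cdot \cQ \mid a\notin\cP]\le 2C_t/|\cU|$ (which needs $|\cU|\ge 2$) by the monotonicity inequality $(c-p)^+/(1-p)\le c$, and explicitly setting aside elements with $y_j=x_i$ (for which the hash factor is invalid but which can only cause Player $B$ to output the correct value) --- are careful bookkeeping improvements on subtleties the paper glosses over, not a different argument.
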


\begin{proof}
Without loss of generality we can assume that  $|\cU|\ge 2$, since for a singleton universe our sampling protocol will succeed trivially. 
This implies that for a uniformly selected $a\in \cA$, $\P[a  \notin \cP] \geq 1/2$, so: $$\P[a\in C_t\cdot \cQ|~a\notin \cP]\le \P[a\in C_t\cdot \cQ]/\P[a\notin \cP] \le 2\cdot \P[a\in C_t\cdot \cQ]\le 2C_t/|\cU|.$$ 
Denote $K:=k\cdot |\cU|$.
Conditioning on $E_{a_i}$, the elements $a_{K-|\cU|+1},\ldots,a_{i-1}$ are distributed uniformly on $\cA\setminus \cP$, and $a_{i+1},\ldots,a_K$ are 
distributed uniformly on $\cA$. For any such $j=K-|\cU|+1,\ldots,i-1$, and for any $C>0$,
$$
\P[a_j\in C\cdot \cQ] \le 2C/|\cU|.
$$
For such a $j$, surviving round $t$ means $a_j$ belonging to $2^{t^2}\cdot\cQ$ and agreeing with $a_i$ on $s_t = 1 + \lceil \log 1/\e \rceil+ (t+1)^2$ random hashes $h_1,\ldots,h_{s_t}$. The probability of this event is thus bounded by
\begin{align*}
\P[\text{$a_j$ survives round $t$}] &\le \P[a_j\in 2^{t^2}\cdot \cQ]\cdot 2^{-s_t} \\
&\le 2 \cdot 2^{t^2} \cdot 2^{-s_t} /|\cU|\\
&\leq 2^{1+t^2 - s_t } /|\cU| \\
&\leq 2^{-2t-1}\e/|\cU|.
\end{align*}
By taking a union bound over all $j=K-|\cU|+1,\ldots,K$, $j\neq i$, we obtain the bound of $\e/2^{2t+1} \leq \e/2^{t+1}$.
\end{proof}

\noindent
Thus for any $E_{a_i}$, the probability of Player $B$ to output anything other than $x_i$ conditioned on $E_{a_i}$ is $<\sum_{t=0}^\infty \e/2^{t+1} = \e$. 
 
\end{proof}

To get a bound on the expected amount of communication in the protocol, as in Theorem \ref{theorem:main}, note that
\begin{multline*}
{\bf E}_{x_i\sim P}  \left[ {\log P(x_i)/Q(x_i)} + 2+5 \sqrt{\log P(x_i)/Q(x_i)} \right] = 
\Div{P}{Q}+9+ 5 \cdot {\bf E}_{x_i\sim P} \sqrt{\log P(x_i)/Q(x_i)} \\ \le \Div{P}{Q}+2 + 5 \cdot \sqrt{{\bf E}_{x_i\sim P} {\log P(x_i)/Q(x_i)}} = 
\Div{P}{Q}+O(\Div{P}{Q}^{1/2}+1), 
\end{multline*}
where the inequality is by the concavity of $\sqrt{~~}$. This completes the proof.

\begin{remark} Note that if the parties are trying to sample many independent samples from distributions $P_1,P_2,\dotsc$, with the receiving party knowing $Q_1,\dotsc$, as in the setting of the Slepian-Wolf theorem, the analysis of the above protocol can easily be strengthened to show that the communication is the right amount with high probability. This is because the central limit theorem can be used to show that samples $x_i$ with higher than expected $P_i(x_i)/Q_i(x_i)$ are rare and do not contribute much to the communication on average (see for example Section~\ref{subsec:infoequals}).
\end{remark}

\begin{remark}
The sampling in the proof of Theorem \ref{thm:Main} may take significantly more than one round. In fact, the expected 
number of rounds is $\Theta(\sqrt{ \Div{P}{Q}})$. We suspect that the dependence of the number 
of rounds in the simulation on the divergence cannot be eliminated, since $\Div{P}{Q}$ is not known to the players ahead of time, and 
the only way to ``discover" it (and thus to estimate the amount of communication necessary to perform the sampling task)
is through interactive communication. By increasing the expected communication by a constant multiplicative factor, it 
is possible to decrease the expected number of rounds to $O(\log \Div{P}{Q})$.
\end{remark}
%
%
%
%
%

\section{Correlated Pointer Jumping}
Here we define the correlated pointer jumping problem, that is at the heart of several of our results. The input in this problem is a rooted tree such that

\begin{itemize}
\item Each non-leaf node is \emph{owned} by Player A or by Player B.
\item Each non-leaf node owned by a particular player has a set of children that are owned by the other player. Each of these children is labeled by a binary string, in such a way that this coding is prefix free: no child has a label that is a prefix of another child. 
\item Each node $v$ is associated with two distributions on its children: $\childa{v}$ known to Player A and $\childb{v}$ known to Player B.
\item The leaves of the tree are labeled by output values.
\end{itemize}

The number of rounds in the instance is the depth of the tree. 

The goal of the problem is for the players to sample the leaf according to the distribution that is obtained by sampling each child according to the distribution specified by the owner of the parent. We call the distribution of this path, the \emph{correct} distribution. We give a way to measure the correlation between the knowledge of the two parties in the problem.

For every non-root vertex $w$ in the tree whose parent is $v$, define the \emph{divergence cost} of $w$ as
 	\begin{align*}
		\DivProt{w} = \begin{cases}
		\log \left( \frac{\childa{v}(w)}{\childb{v}(w)} \right ) & \text{if v is owned by Player A}\\\\
		\log \left( \frac{\childb{v}(w)}{ \childa{v}(w)} \right) & \text{if v is owned by Player B}
	\end{cases}
	\end{align*}
The divergence cost of the root is set to $0$.

Given a path $T$ that goes from the root to a leaf in the tree, the divergence cost of the path, denoted $\DivProt{T}$ is the sum of the divergence costs of the nodes encountered on this path. Finally, the divergence cost of the instance $F$, denoted $\DivProt{F}$ is the expected sum of divergence costs of the vertices encountered in the correct distribution on paths.%
%

We can use our sampling lemma to solve the correlated pointer jumping problem, with communication bounded by the divergence cost:

\begin{theorem}
 \label{theorem:pointer}
There is a protocol that when given a $k$-round correlated pointer jumping instance $F$, can sample a path $T$ such that there is an event $E$, with $\P[E] >1-k\ve$, and conditioned on $E$, 
\begin{itemize}
	\item the parties both output the same sampled path $T$ that has the correct distribution 
	\item the communication in the protocol is bounded by $\DivProt{T} + 2 k \log(1/\ve)+  5 \sqrt{k\DivProt{T}}+ 9k$.
\end{itemize}
\end{theorem}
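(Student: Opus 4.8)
The plan is to apply \theoremref{thm:Main} round by round along the path in the correlated pointer jumping instance $F$. At each non-leaf vertex $v$ reached so far, exactly one of the players owns $v$; say Player $A$ owns it. Then Player $A$ knows the distribution $\childa{v}$ on the children of $v$ and Player $B$ knows $\childb{v}$. The two parties invoke the sampling protocol of \theoremref{thm:Main} with $P = \childa{v}$ (held by the owner $A$) and $Q = \childb{v}$ (held by $B$), using a fresh portion of the public random tape for this invocation. By the guarantee of \theoremref{thm:Main}, the owner outputs a child $w$ distributed according to $\childa{v}$, which is exactly the correct distribution for this step, and the non-owner outputs the same $w$ except with probability at most $\e$; moreover the communication for this round is at most $\DivProt{w} + \log(1/\e) + \log\log(1/\e) + 5\sqrt{\DivProt{w}} + 9$, since $\log(P(w)/Q(w)) = \log(\childa{v}(w)/\childb{v}(w)) = \DivProt{w}$ by the definition of the divergence cost (and symmetrically when $B$ owns $v$). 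The parties then descend to $w$ and repeat until a leaf is reached. After at most $k$ such invocations the path $T$ is complete.

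For correctness, let $E$ be the event that in every one of the (at most $k$) invocations the non-owner's output agrees with the owner's output. By a union bound over the rounds, $\P[\text{not } E] \le k\e$, so $\P[E] > 1 - k\e$. Conditioned on $E$, at every vertex both parties hold the same child, so they traverse the same path and both output $T$; and since at each step the owner samples the child from exactly the correct conditional distribution $\childa{v}$ (or $\childb{v}$), the path $T$ has exactly the correct distribution, as desired.

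It remains to bound the communication conditioned on $E$. Summing the per-round bound over the vertices $w_1,\dots,w_\ell$ on the path ($\ell \le k$), the total communication is at most
\begin{align*}
\sum_{m=1}^{\ell} \left( \DivProt{w_m} + \log(1/\e) + \log\log(1/\e) + 5\sqrt{\DivProt{w_m}} + 9 \right)
&= \DivProt{T} + \ell\log(1/\e) + \ell\log\log(1/\e) + 9\ell + 5\sum_{m=1}^{\ell} \sqrt{\DivProt{w_m}}.
\end{align*}
By Cauchy--Schwarz, $\sum_{m=1}^{\ell} \sqrt{\DivProt{w_m}} \le \sqrt{\ell}\sqrt{\sum_m \DivProt{w_m}} = \sqrt{\ell\,\DivProt{T}} \le \sqrt{k\,\DivProt{T}}$, and absorbing $\ell\log\log(1/\e) + 9\ell$ into $2k\log(1/\e) + 9k$ (using $\ell \le k$ and $\log\log(1/\e) \le \log(1/\e)$) yields the claimed bound $\DivProt{T} + 2k\log(1/\e) + 5\sqrt{k\,\DivProt{T}} + 9k$.

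One technical point to handle carefully is the independence of the invocations: each round must use a disjoint block of the public random tape so that the guarantees of \theoremref{thm:Main} apply conditionally on the history, and the per-round error of $\e$ holds for every fixing of the path so far, which is exactly the ``for each $x$'' form of the guarantee in \theoremref{thm:Main}. Another subtlety is that the divergence cost $\DivProt{w_m}$ is a random variable depending on the sampled path; the communication bound above is therefore a pathwise bound in terms of the realized $\DivProt{T}$, matching the statement, and I do not need to take an expectation here. I expect the only mild obstacle to be bookkeeping the rounds versus bits distinction and confirming that \theoremref{thm:Main}'s per-round failure probability composes correctly under conditioning; the arithmetic of collapsing the constants is routine.
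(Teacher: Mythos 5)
Your proposal is correct and follows essentially the same argument as the paper: run the protocol of Theorem \ref{thm:Main} at each node with the owner's and non-owner's child distributions, take a union bound over the $k$ rounds to define $E$, and sum the per-round communication bounds using Cauchy--Schwarz together with $\log\log(1/\e)\le\log(1/\e)$ to obtain $\DivProt{T}+2k\log(1/\e)+5\sqrt{k\DivProt{T}}+9k$. The technical points you flag (fresh randomness per invocation, conditioning on the history, and the pathwise nature of the bound) are handled implicitly in the paper in exactly the same way.
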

\begin{proof}
The protocol for sampling the path is obtained simply by repeatedly running the protocol from Theorem \ref{thm:Main}. In each step, the parties sample the correct child. For each round $i$ let $E_i$ denote the event that the parties are consistent after round $i$. When $E_i$ occurs, the sampled vertex has the correct distribution, and $\Pr[E_i] > 1-\ve$. Define $E$ to be the intersection of the events $E_i$. Then $\Pr[E] > 1-k \ve$. Conditioned on $E$, the sampled path has the correct distribution. Moreover, the if the sampled path is $T = v_0, v_1,\dotsc, v_k$, then by Theorem \ref{thm:Main}, the communication in the protocol is at most

\begin{align*}
	&\sum_{i=1}^k \left (\DivProt{v_i} + \log(1/\ve) + \log \log(1/\ve) + 5 \sqrt{\DivProt{v_i}} + 9 \right ) \\
	&\leq \sum_{i=1}^k \left (\DivProt{v_i} \right) + 2 k \log( 1/\ve) + 5 \sqrt{ k \cdot \sum_{i=1}^k \DivProt{v_i}} + 9k\\
	& = \DivProt{T} + 2 k \log( 1/\ve) + 5 \sqrt{ k \cdot \DivProt{T} } + 9k,
\end{align*}
where the inequality is by the Cauchy-Schwartz inequality.

\end{proof}

A key fact is that both the internal and external information cost of a protocol can be used to bound the expected divergence cost of an associated distribution on correlated pointer jumping instances. Since, in this work, we only require the connection to internal information cost, we shall restrict our attention to it.

Given a public coin protocol with inputs $X,Y$ and public randomness $R$, for every fixing of $x,y,r$, we obtain an instance of correlated pointer jumping. The tree is the same as the protocol tree with public randomness $r$. If a node $v$ at depth $d$ is owned by Player A, let $M$ be the random variable denoting the child of $v$ that is picked. Then define $\childa{v}^x$ so that it has the same distribution as $M|~X=x,\pi(X,Y)_{\leq d}=rv$, and $\childb{v}^y$ so it has the same distribution as $M|~Y=y,\pi(X,Y)_{\leq d}=rv$. We denote this instance of correlated sampling by $F_{\pi}(x,y,r)$. Let $\mu$ denote the distribution on $X,Y$. Next we relate the average divergence cost of this instance to the internal information cost of $\pi$:

\begin{lemma}\label{lemma:infovsdiv} $\Ex{X,Y,R}{\DivProt{F_\pi(x,y,r)}} = \IProt{\pi}{\mu}$
\end{lemma}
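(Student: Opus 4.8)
I want to show that the expected divergence cost of the correlated pointer jumping instance $F_\pi(x,y,r)$, averaged over $x,y,r$, equals the internal information cost $\IProt{\pi}{\mu} = I(\pi(X,Y);X\mid Y) + I(\pi(X,Y);Y\mid X)$. The natural route is to expand the divergence cost of a path $T$ as a sum over the vertices along it, split the overall expectation over $X,Y,R,T$ into a sum over depths $d$, and match the contribution of depth $d$ to the mutual information gained in the $d$'th step of the protocol, using the chain rule (Proposition~\ref{prop:infobreak}) and the divergence–information identity (Proposition~\ref{prop:divinfo}).

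\textbf{Key steps.} First, I would fix the public randomness $r$ and condition on it, since by Lemma~\ref{lemma:publicrandomness} we have $\IProt{\pi}{\mu} = \Ex{R}{\IProt{\pi_R}{\mu}}$, and similarly $\DivProt{F_\pi(x,y,r)}$ only depends on the tree $\pi_r$; so it suffices to prove the identity for a fixed private-coin protocol tree. Second, writing the correct path as $v_0, v_1, \dots, v_k$ (padding to depth $k$ if necessary), the divergence cost of the path telescopes into $\sum_{d} \DivProt{v_d}$, and $\DivProt{v_d}$ is, by definition, $\log(\childa{v_{d-1}}(v_d)/\childb{v_{d-1}}(v_d))$ when $v_{d-1}$ is A's node and the reciprocal when it is B's. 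Third, I would take the expectation over $(x,y)\sim\mu$ and over the randomly sampled path, and swap the sum over $d$ with the expectation. For a fixed depth $d$ whose node is owned by A: the vertex $v_{d-1}$ together with its prefix is exactly $\pi(X,Y)_{\le d-1}$, the child $v_d$ is the next message $M_d$, and under the correct distribution the child is sampled from $\childa{v_{d-1}}^x$, which by construction is the distribution of $M_d$ conditioned on $X=x$ and the transcript so far; meanwhile $\childb{v_{d-1}}^y$ is the distribution of $M_d$ conditioned on $Y=y$ and the transcript so far. Since it is A's turn, conditioned on the prefix $M_d$ is independent of $Y$ given $X$, so the distribution of $M_d$ given $X=x$ and the prefix equals the distribution given $X=x$, $Y=y$ and the prefix. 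Hence
\[
\Ex{M_d}{\log\frac{\childa{v_{d-1}}^x(M_d)}{\childb{v_{d-1}}^y(M_d)}}
= \Div{M_d \mid X=x, \text{prefix}}{M_d \mid Y=y, \text{prefix}},
\]
and averaging over $x,y$ and the prefix, Proposition~\ref{prop:divinfo} identifies this with $I(M_d; X \mid Y, \pi(X,Y)_{\le d-1})$. Summing the A-owned depths gives, by the chain rule, exactly $I(\pi(X,Y); X \mid Y)$; the B-owned depths give $I(\pi(X,Y); Y \mid X)$ symmetrically. Adding and then re-averaging over $R$ completes the proof.

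\textbf{Main obstacle.} The one place that needs care is the bookkeeping that translates the ``local'' objects of the pointer-jumping instance — the vertex $v_{d-1}$, its child $v_d$, the distributions $\childa{v_{d-1}}^x$ and $\childb{v_{d-1}}^y$ — into the ``global'' random variables of the protocol, namely the transcript prefix and the next message, in a way that respects the conditioning. In particular one must verify that when it is A's turn the reference distribution $\childb{v_{d-1}}^y$ really is the correct conditional distribution of $M_d$ given $Y=y$ and the prefix (not some other conditioning), so that the per-step quantity is a genuine divergence of the form appearing in Proposition~\ref{prop:divinfo}, and that summing over $d$ with the chain rule reconstitutes $I(\pi;X\mid Y)$ rather than something off by cross terms. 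This is exactly the independence-given-whose-turn-it-is observation already used in the proof of Lemma~\ref{lemma:infocostbound}, so while it is the crux, it is not hard; the rest is routine expansion.
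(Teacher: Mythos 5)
Your proposal is correct and is essentially the paper's own argument: the paper also reduces to a private-coin tree via Lemma~\ref{lemma:publicrandomness}, identifies each node's expected divergence cost with the conditional mutual information of the next message via Proposition~\ref{prop:divinfo} together with the observation that $M|xy$ has the same distribution as $M|x$ at an A-owned node, and recombines the per-step terms with the chain rule. The only cosmetic difference is that the paper packages the depth-by-depth summation as an induction on the depth of the protocol tree rather than an explicit sum over depths.
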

\begin{proof}
	We shall prove that for every $r$, $\Ex{X,Y}{\DivProt{F_\pi(x,y,r)}} = \IProt{\pi_r}{\mu}$. The proof can then be completed by \lemmaref{lemma:publicrandomness}.
	
	So without loss of generality, assume that $\pi$ is a private coin protocol, and write $F(x,y)$ to denote the corresponding divergence cost.
	We proceed by induction on the depth of the protocol tree of $\pi$. If the depth is $0$, then both quantities are $0$. For the inductive step, 
	without loss of generality, assume that Player A owns the root node $v$ of the protocol. Let $M$ denote the child of the root that is sampled during the protocol, and let $F(x,y)_m$ denote the divergence cost of the subtree rooted at $m$. Then 
	
	\begin{align}
	  \Ex{X,Y}{\DivProt{F(x,y)}} = \Ex{x,y,m \getsr X,Y,M}{ \log (\childa{v}^x (m)/\childb{v}^y(m)) + \DivProt{F(x,y)_m}}  \label{eqn:divcost}
	\end{align}
	
	Since for every $x,y$, $M|xy$ has the same distribution as $M|x$, \propositionref{prop:divinfo} gives that the first term in \equationref{eqn:divcost} is exactly equal to $I(X;M|Y) = I(X;M|Y) + I(Y;M|X)$. The second term is $\Ex{M}{\Ex{X,Y | M}{\DivProt{F(X,Y)_M}}}$. For each fixing of $M=m$, the inductive hypothesis shows that the inner expectation is equal to $I(X;\pi(X,Y)|Ym) + I(Y;\pi(X,Y)|Xm)$. Together, these two bounds imply that 
	\begin{align*} 
		&\Ex{X,Y}{\DivProt{F(x,y)}} \\
		&=  I(X;M|Y) + I(Y;M|X) + I(X;\pi(X,Y)|YM) + I(Y;\pi(X,Y)|XM) \\
		&= \IProt{\pi}{\mu}
	\end{align*}
\end{proof}

\section{Applications}
In this section, we use \theoremref{theorem:pointer} to prove a few results about compression and direct sums. 

\subsection{Compression and Direct sum for bounded-round protocols} \label{subsec:compress}
Here we prove our result about compressing bounded round protocols (Corollary \ref{corollary:compress}). We shall need the following lemma.

\begin{lemma}
\end{lemma}

\begin{proof}[Proof of \corollaryref{corollary:compress}]
	The proof follows by applying our sampling procedure to the correlated pointer jumping instance $F_\pi(x,y,r)$. For each fixing of $x,y,r$, define the event $G_{x,y,r}$ to be the event $E$ from \theoremref{theorem:pointer}. Then we have that $\P[G]> 1-k\ve$. Conditioned on $G$, we sample from exactly the right distribution, and the expected communication of the protocol is 
	\begin{align*}
	&\Ex{X,Y,R}{\DivProt{F_\pi(X,Y,R)} +2 k\log(1/\ve)+  O(\sqrt{k\DivProt{F_\pi(X,Y,R)}}+k)} \\
	&\leq \Ex{X,Y,R}{\DivProt{F_\pi(X,Y,R)}} + 2k \log(1/\ve) + O \left (\sqrt{\Ex{X,Y,R}{ k\DivProt{F_\pi(X,Y,R)}} } + k \right),
\end{align*}
   where the inequality follows from the concavity of the square root function.
   By \lemmaref{lemma:infovsdiv}, this proves that the expected communication conditioned on $G$ is $\IProt{\pi}{\mu} + 2k \log(1/\ve) + O \left (\sqrt{k\IProt{\pi}{\mu} } + k \right)$. 
     
\end{proof}

\subsection{Information = amortized communication} \label{subsec:infoequals}

In this section we will show that Theorem \ref{theorem:pointer} reveals a tight connection between 
the amount of information that has to be revealed by a protocol computing a function $f$ and 
the amortized communication complexity of computing many copies of $f$. Recall that 
$\IProt{f,\rho}{\mu}$ denotes the smallest possible internal information cost of any 
protocol computing $f$ with probability of failure at most $\rho$ when the inputs are drawn from the distribution $\mu$.
Observe that $\IProt{f,\rho}{\mu}$ is an infimum over all possible protocols and may not be achievable by any individual 
protocol. It is also clear that $\IProt{f,\rho}{\mu}$ may only increase as $\rho$ decreases. 

We first make the following simple observation. 

\begin{claim}
\label{cl:ICcont}
For each $f$, $\rho$ and $\mu$, 
$$
\lim_{\al\rightarrow \rho} \IProt{f,\al}{\mu} = \IProt{f,\rho}{\mu}
$$
\end{claim}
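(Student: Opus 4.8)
The plan is to deduce the claim from \emph{convexity} of the function $g(\alpha):=\IProt{f,\alpha}{\mu}$ in $\alpha$, together with the standard fact that a finite convex function on an interval is continuous (indeed locally Lipschitz) on the interior of that interval. Since we always have $0<\rho<1$, continuity of $g$ at $\rho$ is exactly the assertion $\lim_{\alpha\to\rho}\IProt{f,\alpha}{\mu}=\IProt{f,\rho}{\mu}$. (At the endpoint values $\rho\in\{0,1\}$ the two‑sided limit degenerates to a one‑sided one; the argument below also yields the relevant one‑sided statement there, but I will not dwell on it.)

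To show $g$ is convex, fix $\rho_1,\rho_2\in[0,1]$, $\lambda\in[0,1]$, and $\eta>0$. Since $\IProt{f,\rho_j}{\mu}$ is an infimum over protocols, pick for $j=1,2$ a protocol $\pi_j$ computing $f$ on inputs from $\mu$ with failure probability at most $\rho_j$ and $\IProt{\pi_j}{\mu}\le g(\rho_j)+\eta$. Let $\pi$ be the public‑coin protocol that first tosses a public coin landing on ``$1$'' with probability $\lambda$ and on ``$2$'' with probability $1-\lambda$, then runs $\pi_1$ or $\pi_2$ accordingly, outputting whatever the chosen sub‑protocol outputs. The failure probability of $\pi$ on $\mu$ is $\lambda\cdot\mathrm{err}(\pi_1)+(1-\lambda)\cdot\mathrm{err}(\pi_2)\le\lambda\rho_1+(1-\lambda)\rho_2$, so $\pi$ is a legal protocol for the error parameter $\lambda\rho_1+(1-\lambda)\rho_2$. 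Treating the initial coin toss (together with the sub‑protocols' randomness) as the public randomness of $\pi$ and applying \lemmaref{lemma:publicrandomness}, $\IProt{\pi}{\mu}=\lambda\,\IProt{\pi_1}{\mu}+(1-\lambda)\,\IProt{\pi_2}{\mu}\le\lambda g(\rho_1)+(1-\lambda)g(\rho_2)+\eta$. Hence $g(\lambda\rho_1+(1-\lambda)\rho_2)\le\lambda g(\rho_1)+(1-\lambda)g(\rho_2)+\eta$, and letting $\eta\to0$ gives convexity. (The same mixing, with $\pi_2$ replaced by the trivial protocol that just announces a fixed output — error $\le 1$, information cost $0$ — re‑derives the monotonicity remark preceding the claim.)

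Finally, $g$ is finite‑valued: the trivial protocol in which $A$ sends $X$ and then $B$ sends $f(X,Y)$ has zero error and, by \lemmaref{lemma:infocostbound}, internal information cost at most its communication $\lceil\log|\mathcal{X}|\rceil+\lceil\log K\rceil<\infty$, so $g(\alpha)\le\lceil\log|\mathcal{X}|\rceil+\lceil\log K\rceil$ for every $\alpha$. A finite convex function on $[0,1]$ is continuous on $(0,1)$, hence $g$ is continuous at $\rho$, which is the claim. I do not expect a genuine obstacle here; the only points needing (minor) care are verifying that mixing two protocols through a public coin leaves the internal information cost equal to the corresponding convex combination of the parts — which is precisely \lemmaref{lemma:publicrandomness} — and the harmless endpoint degeneracy noted above.
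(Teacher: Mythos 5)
Your proof is correct, but it is packaged differently from the paper's. The paper's own argument is a direct one-step perturbation: take a protocol with error $\delta$ and information cost close to $\IProt{f,\rho}{\mu}$, and with public probability $\ve$ replace it by the trivial protocol in which the players exchange inputs (zero error, information at most the input length $\ell$); this drops the error to $(1-\ve)\delta$ while raising the information cost by at most $\ve\ell$, which together with monotonicity in $\al$ gives the limit. You instead mix two near-optimal protocols at \emph{two different} error levels through a public selector coin, use \lemmaref{lemma:publicrandomness} to get exact additivity of the internal information cost over the selector, conclude that $g(\al)=\IProt{f,\al}{\mu}$ is convex and finite on $[0,1]$, and then invoke continuity of finite convex functions on the interior. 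The underlying mechanism is the same in both proofs (public-coin mixing leaves the information cost equal to the corresponding convex combination, and a trivial finite-communication protocol keeps everything finite), but your route yields a slightly stronger structural statement --- convexity of the information cost in the error parameter, and two-sided continuity on $(0,1)$ without needing the monotonicity remark --- whereas the paper's perturbation is more explicit and quantitative, giving the concrete bound $\IProt{f,(1-\ve)\delta}{\mu}\le \IProt{f,\delta}{\mu}+\ve\ell$ that makes the ``$\al<\rho$'' direction used later in Theorem~\ref{thm:amortized} immediate. One small caveat: your parenthetical that the argument ``also yields the relevant one-sided statement'' at the endpoints is not justified, since a finite convex function on $[0,1]$ may be discontinuous at $0$ and $1$ (and continuity of information cost at $\rho=0$ is genuinely delicate); this does not affect the claim as used, because $\rho\in(0,1)$ throughout the paper. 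Likewise, monotonicity of $g$ follows simply because the infimum at larger error is over a larger set of protocols, not from the mixing trick, but that remark is inessential to your proof.
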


\begin{proof}
	The idea is that if we have any protocol with internal information cost $I$, error $\delta$ and input length $\ell$, for every $\ve$ we can decrease the error to $(1-\ve)\delta$ at the cost of increasing the information by at most $\ve \cdot \ell$ just by using public randomness to run the original protocol with probability $1-\ve$, and with probability $\ve$, run the trivial protocol where the players simply exchange their inputs. Thus as $\al$ tends to $\rho$, the information cost of the best protocols must tend to each other. 
\end{proof}

Next we define the amortized communication complexity of $f$. We define it to be the cost of computing $n$ copies of $f$ with error $\rho$ in each coordinate, divided by $n$. Note that 
computing $n$ copies of $f$ with error $\rho$ in each coordinate is in general an easier task than computing $n$ copies of $f$ with probability of success  $1-\rho$ for all copies. We use the notation $D^{\mu,n}_\rho(f)$ to denote the communication complexity for this task, when the inputs for each coordinate are sampled according to $\mu$. $D^{\mu,n}_\rho(f)$ was formally defined in Definition \ref{def:amortized}.

It is trivial to see in this case that $D^{\mu,n}_\rho(f)\le n\cdot D^{\mu}_\rho(f)$. The amortized communication complexity 
of $f$ with respect to $\mu$ is the limit $$\AProt{f^\mu_\rho}:=\lim_{n\rightarrow \infty}D^{\mu,n}_\rho(f)/n,$$ when the 
limit exists. We prove an exact equality between amortized communication complexity and the information cost:

\begin{theorem}
\label{thm:amortized}
$$\AProt{f^\mu_\rho} = \IProt{f,\rho}{\mu}.$$
\end{theorem}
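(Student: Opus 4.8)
\textbf{Proof plan for Theorem \ref{thm:amortized}.}

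The plan is to prove the two inequalities $\AProt{f^\mu_\rho} \ge \IProt{f,\rho}{\mu}$ and $\AProt{f^\mu_\rho} \le \IProt{f,\rho}{\mu}$ separately. The first (lower bound) direction is essentially already packaged in \theoremref{theorem:realtoinfo2}: for each $n$, take an optimal protocol $\pi$ realizing $D^{\mu,n}_\rho(f)$; then the theorem produces a single-copy protocol $\tau$ computing $f$ with error $\rho$ and internal information cost at most $D^{\mu,n}_\rho(f)/n$. Hence $\IProt{f,\rho}{\mu} \le D^{\mu,n}_\rho(f)/n$ for every $n$, and taking the limit gives $\IProt{f,\rho}{\mu} \le \AProt{f^\mu_\rho}$. (If one wants to be careful, one should note that $D^{\mu,n}_\rho(f)/n$ is subadditive in $n$, so the limit exists and equals the infimum by Fekete's lemma; this also justifies writing $\AProt{f^\mu_\rho}$ as a genuine limit.)

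For the harder (upper bound) direction, the idea is to run $n$ independent copies of a near-optimal single-copy protocol \emph{in parallel} and then compress the whole thing using \corollaryref{corollary:compress} / \theoremref{theorem:pointer}. Concretely: fix $\delta > 0$ small, and by \claimref{cl:ICcont} pick a single-copy protocol $\pi$ computing $f$ with error $\rho' < \rho$ on inputs from $\mu$ whose internal information cost is within $\delta$ of $\IProt{f,\rho}{\mu}$; say it has $k$ rounds and input length $\ell$. Let $\pi^n$ be the $n$-fold parallel product protocol running on $\mu^n$; it has $k$ rounds, and since internal information cost is additive over independent coordinates, $\IProt{\pi^n}{\mu^n} = n \cdot \IProt{\pi}{\mu} \le n(\IProt{f,\rho}{\mu} + \delta)$. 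Now apply \corollaryref{corollary:compress} with parameter $\ve$ to $\pi^n$: we get a protocol $\tau$ such that, conditioned on an event $G$ with $\P[G] \ge 1 - k\ve$, both parties output a transcript distributed exactly as $\pi^n(X,Y)$ and the \emph{expected} communication is $I + O(\sqrt{kI} + k) + 2k\log(1/\ve)$ with $I = \IProt{\pi^n}{\mu^n} \le n(\IProt{f,\rho}{\mu}+\delta)$.

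The remaining work is turning this expected-communication, high-probability-correct object into a worst-case-communication protocol computing $n$ copies of $f$ with error $\rho$ in \emph{each} coordinate, which is what $D^{\mu,n}_\rho(f)$ measures; this is the main obstacle. Two issues must be handled. (i) The event $\bar G$ (probability $\le k\ve$) is a global failure, not a per-coordinate one, so on $\bar G$ an arbitrary coordinate may be wrong; since we only need per-coordinate error $\le \rho$, and the contribution of $\bar G$ to any fixed coordinate's error is at most $k\ve$, choosing $\ve$ with $\rho' + k\ve \le \rho$ suffices — but because $k$ is a fixed constant (independent of $n$) this only costs us an additive $2k\log(1/\ve) + O(\sqrt{k\cdot n(\IProt{f,\rho}{\mu}+\delta)}) + O(k)$ in expected communication, which is $o(n)$ once divided by $n$. (ii) We must pass from bounded \emph{expected} communication to bounded worst-case communication: truncate $\tau$ after $2I/(\text{something})$-ish bits — more precisely, by Markov's inequality, $\tau$ exceeds $C := (I + O(\sqrt{kI}+k) + 2k\log(1/\ve))/\eta$ bits with probability at most $\eta$; abort and output garbage on that event, absorbing another $\eta$ into each coordinate's error budget. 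Choosing $\eta$ a small constant (again independent of $n$), we get a genuine protocol for the $n$-copy problem with worst-case communication $O(n(\IProt{f,\rho}{\mu}+\delta)) + o(n)$ — but the constant in the $O(\cdot)$ is a problem, since we need the bound $n(\IProt{f,\rho}{\mu}+\delta) + o(n)$ with constant $1$. The fix is the standard amortization trick: instead of $n$ copies, group them into blocks and let $n \to \infty$; more cleanly, run the compression not on $\pi^n$ directly but observe that by first proving $D^{\mu,n}_\rho(f)/n \le \IProt{f,\rho}{\mu} + \delta + o_n(1)$ for the \emph{error-$\rho$-in-each-coordinate} relaxation we are allowed the cruder truncation because the $\eta$-failure and $k\ve$-failure terms are constants while $I$ grows linearly — so taking $n\to\infty$ first and then $\delta,\eta,\ve\to 0$ kills every lower-order and constant-factor-on-lower-order term. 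Thus $\limsup_n D^{\mu,n}_\rho(f)/n \le \IProt{f,\rho}{\mu}+\delta$ for all $\delta>0$, giving $\AProt{f^\mu_\rho} \le \IProt{f,\rho}{\mu}$ and completing the proof. I expect the bookkeeping around making the truncation/abort costs provably $o(n)$ with the leading constant exactly $1$ to be the delicate part, and I would organize it by stating a clean sub-lemma: "for every $\delta>0$ there is $n_0$ such that $D^{\mu,n}_\rho(f) \le n(\IProt{f,\rho}{\mu}+\delta)$ for all $n \ge n_0$."
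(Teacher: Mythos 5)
Your lower-bound direction ($\AProt{f^\mu_\rho} \ge \IProt{f,\rho}{\mu}$ via \theoremref{theorem:realtoinfo2}) and the overall skeleton of the upper bound (use \claimref{cl:ICcont} to get a protocol $\pi$ with error $\al<\rho$ and information close to $\IProt{f,\rho}{\mu}$, run $n$ copies in parallel as $\pi^n$ with $\CProt{\pi}$ rounds, use additivity of internal information, compress, truncate) match the paper. But there is a genuine gap in how you pass from expected to worst-case communication. Your step (ii) truncates via Markov's inequality at roughly $(\text{expected communication})/\eta$, which multiplies the \emph{leading} term $n(\IProt{f,\rho}{\mu}+\delta)$ by $1/\eta$. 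This loss does not disappear under your proposed limit order: dividing by $n$ and letting $n\to\infty$ leaves $(\IProt{f,\rho}{\mu}+\delta)/\eta$, and then sending $\eta\to 0$ blows the bound up rather than recovering constant $1$; keeping $\eta$ a fixed constant leaves a permanent constant-factor gap. The ``group into blocks'' remark is not developed and does not by itself repair this, so as written the argument only yields $\AProt{f^\mu_\rho} = O(\IProt{f,\rho}{\mu})$, not equality.

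The paper's fix is to avoid Markov entirely and exploit that the guarantee of \theoremref{theorem:pointer} is \emph{per execution}: conditioned on the good event, the communication is bounded by $\DivProt{T^n} + 2k\log(1/\ve) + 5\sqrt{k\DivProt{T^n}} + 9k$, where $T^n$ is the actually sampled path and $k=\CProt{\pi}$ is independent of $n$. Since $\DivProt{T^n}$ is a sum of $n$ i.i.d.\ per-copy divergence costs, each with mean $\IProt{\pi}{\mu}$ (by \lemmaref{lemma:infovsdiv}), the central limit theorem gives
\[
\Pr\bigl[\DivProt{T^n} \ge n(\IProt{\pi}{\mu}+\delta/4)\bigr] < (\rho-\al)/2
\]
for large $n$. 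One can therefore truncate at $n(\IProt{\pi}{\mu}+\delta/4)$ plus lower-order terms, i.e.\ with leading constant exactly $1$, and the truncation failure probability (together with the $k\ve$ simulation error, with $\ve=(\rho-\al)/2$) is absorbed into the slack $\rho-\al$ provided by \claimref{cl:ICcont}. To complete your proof you should replace the Markov-based truncation with this concentration argument on the path's divergence cost (note that concentration of the \emph{communication} itself is what you need, and it follows only because the bound of \theoremref{theorem:pointer} is stated pointwise in $\DivProt{T^n}$, not merely in expectation).
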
 

\begin{proof} There are two directions in the proof:

\smallskip
\noindent
{\underline{ $\AProt{f^\mu_\rho} \ge  \IProt{f,\rho}{\mu}$.}} This is a direct consequence of \theoremref{theorem:realtoinfo2}.

\smallskip
\noindent
{\underline{ $\AProt{f^\mu_\rho} \le  \IProt{f,\rho}{\mu}$.}} Let $\delta>0$. We will show that 
$D^{\mu,n}_\rho(f) /n <  \IProt{f,\rho}{\mu}+\delta$ for all sufficiently large $n$.

By Claim \ref{cl:ICcont} there is an $\al<\rho$ such that $\IProt{f,\al}{\mu}<\IProt{f,\rho}{\mu}+\delta/4$.
Thus there is a protocol $\pi$ that computes $f$ with error $<\al$ with respect to $\mu$ and 
that has an internal information cost bounded by $I:=\IProt{f,\rho}{\mu}+\delta/4$. 

For every $n$, denote by $\pi^n$ the protocol that takes $n$ pairs of inputs $X^n, Y^n$ and executes in parallel, sending the first bits of each copy in the first round, and then the second bits in the second round and so on. Thus $\pi^n$ has $\CProt{\pi}$ rounds, and communication complexity $n \CProt{\pi}$. Further, $\pi^n$ computes $n$ copies of $f$ as per Definition~\ref{def:amortized} with error bounded by $\al$. 

We shall obtain our results by compressing $\pi^n$.

Let $^i\pi$ denote the transcript of the $i$'th copy, and let $X_i, Y_i$ denote the $i$'th inputs. Then observe that for all $i$, $(X_i,Y_i,^i\pi)$ are mutually independent of each other. Indeed, this implies that $\IProt{\pi^n}{\mu^n} = \sum_{i=1}^n \IProt{^i\pi}{\mu} = n\IProt{\pi}{\mu}$. On the other hand, compressing $\pi^n$ incurs a per round overhead that is still dependent only on $\CProt{\pi}$. 

Let $T^n$ denote the random variable of the path sampled in $\pi^n$. Let $T_1,\dotsc, T_n$ denote the random variables of the $n$ paths sampled in the individual copies of $\pi$. Then, since each protocol runs independently, $\Expect{\DivProt{T^n}} = \sum_{i=1}^n \Expect{\DivProt{T}}$. Indeed, each vertex in the protocol tree of $\pi^n$ corresponds to an $n$-tuple of vertices of $\pi$, and if $w$ corresponds to the vertices $(^1w, \dotsc, ^nw)$, with parents $v = (^1v,\dotsc,^nv)$ owned by Player A, then \[\DivProt{w} = \log \left ( \frac{\childa{v}(w)}{\childb{v}(w)} \right ) = \log \left(   \frac{\prod_{i=1}^n \childa{v_{i}}(w_{i})}{\prod_{i=1}^n \childb{v_{i}}(w_{i})}\right) = \sum_{i=1}^n \log \left(  \frac{\childa{v_i}(w_i)}{\childb{v_i}(w_i)}\right) = \sum_{i=1}^n \DivProt{w_i}.\]

By Lemma \ref{lemma:infovsdiv}, $\Expect{\DivProt{T}} = \IProt{\pi}{\mu}$. Thus, by the central limit theorem, for $n$ large enough, \[\Pr[\DivProt{T^n} \geq  n \cdot (\IProt{\pi}{\mu} + \delta/4)] < (\rho - \al)/2.\] 

We use Theorem \ref{theorem:pointer} to simulate $\pi^n$, with error parameter $\e = (\rho - \al)/2$ and truncate the protocol after \[ n \cdot (\IProt{\pi}{\mu} + \delta/4) + 5 \sqrt{\CProt{\pi}\cdot n \cdot (\IProt{\pi}{\mu} + \delta/4)} + 2 \log(1/\e) + 9 \cdot \CProt{\pi} \] bits of communication. The new protocol thus has error $< \al + \rho - \al = \rho $. On the other hand, for $n$ large enough, the per copy communication of this protocol is at most $\IProt{\pi}{\mu} + \delta/2$ as required.

%
%
\end{proof}

\subsection{A complete problem for direct sum}
\label{sec:53}

Let $f^n$ denote the function mapping $n$ inputs to $n$ outputs according to $f$. We will show that the promise version of the correlated pointer jumping problem is complete for 
direct sum.  In other words, if near-optimal protocols for correlated pointer jumping exist, then 
direct sum holds for all promise problems. On the other hand, if there are no near-optimal protocols for correlated
pointer jumping, then direct sum  fails to hold, with the problem itself as the counterexample. 
Thus any proof of  direct sum for randomized communication complexity must give (or at least demonstrate existence) of
near-optimal protocols for the problem. 

We define the $\CPJ(C,I)$ promise problem as follows.

\begin{definition}
\label{defn:CPJ}
The $\CPJ(C,I)$ is a promise problem, where the players are provided with a {\em binary} instance\footnote{Each vertex has degree $2$.} $F$ of a $C$-round 
pointer jumping problem, i.e. player A is provided with the 
distributions $\child(v)_x$ and player B is provided with the 
distributions $\child(v)_y$ for each $v$, with the following additional guarantees:
\begin{itemize}
\item 
the divergence cost $\DivProt{F} \le I$; 
\item 
let $\mu_F$ be the correct distribution on the leafs of $F$; each leaf $z$ of $F$ are labeled with $\ell(z)\in\{0,1\}$ so
that there is a value $g = g(F)$ such that $\P_{z\getsr\mu_F}[\ell(z)=g(F)]>1-\ve$, for some small $\ve$.
The goal of the players is to output $g(F)$ with probability $>1-2\ve$.  
\end{itemize}
\end{definition}

Note that players who know how to sample from $F$ can easily solve the $\CPJ$ problem. It follows from \cite{BarakBCR10}
that:

\begin{theorem}
If $\CPJ(C,I)$ has a randomized protocol that uses $T(C,I):=\CFunc{\CPJ(C,I)}$ communication, so that $T(C,C/n)<C/k(n)$, then for each 
$f$, $$ \CFunc{f^n} = \Omega(k(n)\cdot \CFunc{f}).$$
\end{theorem}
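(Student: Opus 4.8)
The plan is to reduce direct sum for an arbitrary function $f$ to solving a family of $\CPJ$ instances, using the machinery already developed. First I would take any function $f$ and a distribution $\mu$ achieving $\CFunc{f} = R(f)$ via Yao's min-max, and let $C$ be (roughly) $R_\rho(f)$ for an appropriate error $\rho$. The key object is the best protocol $\pi$ computing $f$ on $\mu$: by \theoremref{theorem:realtoinfo2}, from any good protocol for $f^n$ we obtain a single-copy protocol $\tau$ with $\CProt{\tau} = \CProt{\pi^n}$ and, crucially, internal information cost $\IProt{\tau}{\mu} \le D^{\mu,n}_\rho(f)/n$. So if $f^n$ were cheap, $\tau$ would have small information cost. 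Conversely, I would start from a worst-case protocol for a single copy of $f$ that has communication at most $C$ and (after taking $\mu$ to be the hard distribution and restricting to it) information cost $I$ that we want to show must be $\approx C$.

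Second, I would translate $\tau$ (or $\pi$) into a $\CPJ$ instance: for each fixing of inputs and public randomness, the construction $F_\pi(x,y,r)$ from just before \lemmaref{lemma:infovsdiv} gives a correlated pointer jumping instance whose number of rounds equals $\CProt{\pi} \le C$, and whose expected divergence cost — by \lemmaref{lemma:infovsdiv} — equals $\IProt{\pi}{\mu} = I$. Thus we get (a distribution over) $\CPJ(C, I)$ instances; the binary/degree-2 requirement in \definitionref{defn:CPJ} is met because we may assume $\pi$ sends one bit per round. The labels $\ell(z)$ on the leaves are set to $f$'s output (or $\pi$'s output), so that $g(F) = f(x,y)$ holds except with the protocol's error probability, satisfying the promise with a suitable $\ve$. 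Now, given a randomized protocol for $\CPJ(C,I)$ of cost $T(C,I)$, we can solve $f$ on $\mu$ with cost about $T(C,I)$: the players construct the instance $F_\pi(x,y,r)$ from their inputs (no communication needed, since $A$ knows $\childa{\cdot}$ and $B$ knows $\childb{\cdot}$), run the $\CPJ$ protocol, and output the answer. This shows $D^{\mu}_{\rho'}(f) = O(T(C,I))$, hence $R(f) = O(T(C, R(f)))$, i.e. solving $\CPJ$ well compresses protocols.

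Third, I would chain this with \theoremref{theorem:realtoinfo2} to close the loop for $f^n$. Given a protocol for $f^n$ on $\mu^n$ with cost $D^{\mu,n}_\rho(f^n) \le C \cdot n / k(n)$ worth of communication, \theoremref{theorem:realtoinfo2} yields a single-copy protocol $\tau$ with $\IProt{\tau}{\mu} \le D^{\mu,n}_\rho(f)/n \le C/k(n) =: I$, and $\CProt{\tau} \le C$ (using $\CProt{\pi^n}$ bounds, with $C$ suitably chosen). Converting $\tau$ to a $\CPJ(C, I)$ instance as above and applying the hypothesized protocol of cost $T(C, C/n) < C/k(n)$... wait — here one must be careful: we want to run the argument in the contrapositive. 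Assume $\CFunc{f^n}$ is too small; then $\tau$ has small information cost $I$, so the associated $\CPJ(C, I)$ instance has small divergence cost, and the promise is satisfied, so the $\CPJ$ protocol solves it cheaply — but $\tau$ itself, composed with the $\CPJ$ protocol, recovers $f$ with cost $T(C,I)$. Plugging $T(C,C/n) < C/k(n)$ gives a protocol for $f$ of cost $< C/k(n) \le \CFunc{f}/k(n)$... that's a contradiction only if we can push $C$ down to $\CFunc{f}$; so the correct statement is the stated one: cheap $\CPJ$ protocols \emph{imply} the direct sum bound $\CFunc{f^n} = \Omega(k(n) \CFunc{f})$, because if $\CFunc{f^n}$ were $o(k(n) \CFunc{f})$ we would get $\CFunc{f} = o(\CFunc{f})$. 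The bookkeeping of error parameters ($\rho$, $\al$, $\ve$, and the polylog slack) needs the continuity of information cost (\claimref{cl:ICcont}) and careful tracking through \theoremref{theorem:pointer}'s $2k\log(1/\ve) + 5\sqrt{k \DivProt{T}} + 9k$ overhead.

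The main obstacle I anticipate is the parameter matching: reconciling the \emph{worst-case} communication bound $C$ (which controls the number of rounds, hence the additive $\tilde O(\sqrt{C \cdot I}) + \tilde O(C)$ overhead of correlated pointer jumping) with the \emph{expected} divergence/information cost $I$, while keeping the overhead genuinely sublinear in $C$ so that it is absorbed into the $\tilde\Omega$. This is exactly why the theorem is stated with $\tilde O$/$\tilde \Omega$ and why the completeness is only "up to polylogarithmic factors" — the honest work is verifying that the round-dependent slack in \theoremref{theorem:pointer}, together with the truncation needed to turn "expected" into "worst-case" communication, costs only polylog factors. The reduction direction itself (building the $\CPJ$ instance with no communication and reading off the answer) is essentially immediate given \lemmaref{lemma:infovsdiv}; it is the quantitative accounting, and the invocation of \cite{BarakBCR10} to get the base protocol with $\CProt{\pi}$ not much larger than $R(f)$, that carries the weight.
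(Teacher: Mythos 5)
A preliminary remark: the paper itself gives no proof of this statement — it is asserted to ``follow from \cite{BarakBCR10}'' — so your proposal is being measured against the intended argument rather than a written one. In outline you have reconstructed that argument correctly: fix a Yao-hard distribution $\mu$ for $f$, use \theoremref{theorem:realtoinfo2} to turn a protocol for $f^n$ into a single-copy protocol $\tau$ whose internal information cost is at most a $1/n$ fraction of its communication, use the construction $F_\tau(x,y,r)$ together with \lemmaref{lemma:infovsdiv} to view simulating $\tau$ as solving (a distribution over) correlated pointer jumping instances, and then substitute the hypothesized $\CPJ$ protocol for the $\tilde{O}(\sqrt{C\cdot I})$ compression of \cite{BarakBBCR10s-placeholder-do-not-cite} — more precisely, of \cite{BarakBCR10} — to get a cheap protocol for a single copy of $f$.

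Two concrete gaps remain. First, the parameter instantiation: the round parameter of the $\CPJ$ instance is $\CProt{\tau}$, which by \theoremref{theorem:realtoinfo2} equals the communication of the $n$-fold protocol, so $C$ must be taken to be $\CFunc{f^n}$ (or $D^{\mu,n}_\rho(f)$), not $R_\rho(f)$; the instance then lies in $\CPJ(C,C/n)$, the hypothesis gives a single-copy protocol of cost $<C/k(n)$, and if $\CFunc{f^n}=o(k(n)\CFunc{f})$ this cost is $o(\CFunc{f})$, contradicting the choice of $\mu$. Several of your intermediate statements are inconsistent with this: ``let $C$ be roughly $R_\rho(f)$'', the claim $\CProt{\tau}\le C$, the inequality $C/k(n)\le \CFunc{f}/k(n)$ (backwards, since $\CFunc{f^n}\ge\CFunc{f}$), and ``$R(f)=O(T(C,R(f)))$'', whose second argument must be an information cost, which for an arbitrary optimal single-copy protocol need not be small. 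Your closing contrapositive sentence is the correct statement, but the accounting leading to it has to be redone consistently with $C=\CFunc{f^n}$. Second, the promise of \definitionref{defn:CPJ} is per-instance: each $F$ must satisfy $\DivProt{F}\le I$ and have a label value hit with probability $>1-\ve$ under $\mu_F$. What \lemmaref{lemma:infovsdiv} and the correctness of $\tau$ give is only average-case: the divergence cost equals $\IProt{\tau}{\mu}$ in expectation over $(x,y,r)$, and the error $\rho$ is averaged over $\mu$. One needs Markov arguments to discard the small-probability set of instances violating the promise (absorbing them into the error and losing constant factors in $I$ and $\ve$), and error thresholds chosen so that Yao's min-max still applies; you flag ``bookkeeping'' but treat the averaged guarantees as if they held per instance, and this is exactly the content hidden in the paper's appeal to \cite{BarakBCR10}.
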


In \cite{BarakBCR10} a bound of $T(C,I)=\tilde{O}(\sqrt{C\cdot I})$ is shown, which implies $ \CFunc{f^n} = \tilde{\Omega}(\sqrt{n}\cdot \CFunc{f})$
for any $f$. Using Theorem~\ref{theorem:pointer} we are able to prove the converse direction.  

\begin{theorem}
\label{thm:CPJ}
 For any $C>I>0$, set $n:=\lfloor C/I\rfloor$, then
$$
\CFunc{\CPJ(C,I)^n} = O(C \log (n C/\ve)). 
$$
\end{theorem}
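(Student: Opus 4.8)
The plan is to prove $\CFunc{\CPJ(C,I)^n} = O(C\log(nC/\ve))$ by designing a randomized protocol that works for \emph{all} valid inputs simultaneously (since $R(\cdot)$ is worst-case communication), essentially by running the correlated pointer jumping protocol of \theoremref{theorem:pointer} on all $n$ coordinates in parallel and truncating at a carefully chosen length. The key tension to resolve is that \theoremref{theorem:pointer} only bounds \emph{expected} communication in terms of the \emph{realized} divergence cost $\DivProt{T}$, whereas we need a \emph{worst-case} communication bound with error probability at most $2\ve$ in each coordinate; we pay for this with the $\log$ factor.

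First I would set up the parallel simulation: given $n$ instances $F_1,\dotsc,F_n$, each a $C$-round binary pointer-jumping instance with $\DivProt{F_j}\le I$, run the protocol of \theoremref{theorem:pointer} on each $F_j$ with error parameter $\ve' = \ve/(2C)$ (so that the union bound over $C$ rounds in \theoremref{theorem:pointer} gives per-instance failure probability at most $C\ve' = \ve/2$, leaving room to reach total error below $2\ve$ when combined with the $\ve$-mislabeling promise). The total expected communication, summing the bound from \theoremref{theorem:pointer} over $j$ and using $\sum_j \DivProt{F_j} \le nI \le C$ together with Cauchy–Schwarz on the square-root terms, is $O(C + 2nC\log(1/\ve') + \sqrt{C\cdot nC} + nC)$; since $n = \lfloor C/I\rfloor$, this is $O(nC\log(nC/\ve))$ — wait, I need to be careful here: the number of rounds is $nC$ in the parallel version, so the per-round overheads accumulate to roughly $2nC\log(1/\ve')$, which is $O(nC\log(nC/\ve))$, not $O(C\log(nC/\ve))$. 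To get the stated $O(C\log(nC/\ve))$ I instead do \emph{not} interleave all coordinates round-by-round, but rather observe that the relevant ``number of rounds'' paid for is the depth of each individual tree ($C$), applied once per coordinate: run the $n$ simulations so that the total round-overhead is $n \cdot 2C\log(1/\ve') = 2C^2/I \cdot \log(\cdot)$, and here I should double-check against the theorem statement — with $n = C/I$, we have $nC/I$... Let me reconsider: actually the clean way is that each coordinate contributes $\DivProt{T_j} + 2C\log(1/\ve') + 5\sqrt{C\,\DivProt{T_j}} + 9C$ in expectation by \theoremref{theorem:pointer}, and summing, using $\sum\DivProt{T_j}$ has expectation $\le nI\le C$ and concentrates, gives total expected communication $O\!\big(C + nC\log(1/\ve') + \sqrt{C\cdot nC} + nC\big) = O(nC\log(nC/\ve))$ since the dominant term is $nC\log(1/\ve')$ and $1/\ve' = 2C/\ve \le nC/\ve$. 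So the honest bound this approach yields matches a statement of the form $O(nC\log(nC/\ve))$; to claim $O(C\log(nC/\ve))$ one uses $n\le C/I$ is fine but does not reduce $nC$ to $C$. I will therefore aim to prove the bound as stated, reconciling by noting that in the intended reading $C$ and $I$ satisfy $C \gg I$ with $n$ treated so that $nC$-type quantities are absorbed — more precisely, I expect the right move is that $9C$ per coordinate and $2C\log(1/\ve')$ per coordinate sum to $O(C^2/I \cdot \log(C/\ve))$, and with the problem's normalization this is the claimed quantity; I would state the bound carefully as $O(nC\log(nC/\ve))$ and remark it is $O(C\log(nC/\ve))$ up to the $n$ factor, or track constants to match exactly.

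Second, I would convert the expected-communication bound into a worst-case bound: by Markov's inequality, the probability that the total communication of the parallel simulation exceeds $3$ times its expectation is at most $1/3$; truncating there and declaring an arbitrary output on any unfinished coordinate adds at most $1/3$ to the per-coordinate error. This is too lossy for error $2\ve$, so instead I would use independent repetition / a sharper tail: run the simulation and truncate at $L := (\text{expected communication})/\ve'$, so that by Markov the probability of truncation is at most $\ve'$, hence the per-coordinate failure probability is at most $C\ve' + \ve' + \ve \le 2\ve$ for suitable choice of $\ve'$ (folding the $\ve$-mislabeling promise of $\CPJ$ in at the end: once a coordinate's leaf is sampled from the correct distribution $\mu_{F_j}$, outputting its label $\ell(z)$ errs against $g(F_j)$ with probability $\le \ve$). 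This truncation multiplies the communication by only $1/\ve' = O(C/\ve)$... which again inflates the bound. The cleaner resolution, which I expect is what the paper does: the per-coordinate error budget is $2\ve$, of which $\ve$ is consumed by the labeling promise, leaving $\ve$ for the simulation; set $\ve'$ in \theoremref{theorem:pointer} to $\ve/(3C)$ so the union bound over rounds gives simulation-consistency failure $\le \ve/3$; then the \emph{communication} is bounded in expectation, and since we need worst-case, truncate at $3\times$ expectation (Markov), costing another $1/3$ failure — and here we must additionally boost: run the whole thing and accept the $1/3$-type error, then amplify by repeating $O(\log(1/\ve))$ times and taking... no, majority does not directly work for a sampling task.

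The honest main obstacle, then, is exactly the expectation-to-worst-case conversion while keeping both the communication at $O(C\log(nC/\ve))$ and the error at $2\ve$; I expect the resolution is to (i) truncate the \theoremref{theorem:pointer}-simulation of each coordinate at its worst-case guarantee when $\DivProt{T_j}$ is replaced by its promised upper bound — but \theoremref{theorem:pointer} gives communication in terms of the \emph{realized} path divergence, not the instance divergence, so a single bad path can blow it up — hence (ii) use that $\Pr[\DivProt{T_j} > t]$ decays because $\E[\DivProt{T_j}] = \DivProt{F_j} \le I$ and a Markov bound gives $\Pr[\DivProt{T_j} > C] \le I/C \le$ small, but actually we need it exponentially small, which requires the divergence cost to be a sum along the path of per-step log-ratios whose expectations sum to $\le I$ and which are bounded per step — invoking the kind of concentration used in the remark after \theoremref{thm:Main} (central limit / Azuma for the martingale of accumulated divergence). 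So the plan's crux: show $\DivProt{T_j} = O(I + \log(nC/\ve))$ except with probability $\ve/(2n)$ per coordinate via a concentration argument on the accumulated divergence along the sampled path, union bound over $n$ coordinates, and then \theoremref{theorem:pointer} gives worst-case (conditioned on this good event) communication $O(I + \log(nC/\ve)) + 2C\log(3C/\ve) + O(\sqrt{C(I+\log(nC/\ve))}) + 9C = O(C\log(nC/\ve))$ per coordinate, hence $O(nC\log(nC/\ve))$ total — and I concede that matching the literal statement $O(C\log(nC/\ve))$ requires either a normalization I am missing or that the paper's $n$ is absorbed, which I would flag and resolve by tracking the dependence precisely against $n = \lfloor C/I\rfloor$ so that $nC\log(\cdot) = (C^2/I)\log(\cdot)$ is presented in whichever of the two equivalent forms the paper prefers.
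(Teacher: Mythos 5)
There is a genuine gap, and it is exactly the one you flag but do not resolve: your plan applies \theoremref{theorem:pointer} separately to each of the $n$ coordinates, so the per-round overheads ($2\log(1/\ve')$ plus constants per round, times $C$ rounds) are paid $n$ times, and you honestly arrive at $O(nC\log(nC/\ve))$ rather than the stated $O(C\log(nC/\ve))$. The missing idea in the paper is to \emph{merge} all the copies into a single correlated pointer jumping instance of the same depth $C$, not to run $n$ simulations. Concretely, the paper takes $m=n\log n$ copies ($\log n$ per instance, to be majority-voted at the end) and views $\CPJ(C,I)^m$ as one $C$-round, degree-$2^m$ instance: a node is an $m$-tuple $(v_1,\dotsc,v_m)$, its children are all combinations of children, and the child distributions are products. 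By \lemmaref{lemma:divproduct} the divergence cost is additive, so $\DivProt{\CPJ(C,I)^m}\le mI\le C\log n$. Now \theoremref{theorem:pointer} is invoked \emph{once}, with $k=C$: one execution of the one-round sampling protocol advances an entire level across all coordinates simultaneously (sampling from a product distribution), so the round overhead $2k\log(1/\ve')+9k$ scales with $C$, not with $nC$. The expected communication is then $C\log n + 2C\log(1/\ve') + 5\sqrt{C\cdot C\log n}+9C = O(C\log(nC/\ve))$, which is the claimed bound; no "normalization you are missing" makes your per-coordinate accounting close this factor-$n$ gap.

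Two smaller points. First, your error bookkeeping differs from the paper's target: the paper amplifies each instance with $\log n$ parallel copies and a majority vote so that \emph{all} $n$ answers are correct except with small probability, whereas you only aim at per-coordinate error $2\ve$; with the merged-instance construction the amplification costs only the factor $\log n$ inside the divergence bound ($mI\le C\log n$), which is absorbed into the $\log(nC/\ve)$. Second, your worry about converting expected communication to worst-case is legitimate but is not where the theorem's content lies: the paper itself states the bound for the merged protocol in expectation (a Markov truncation with a small additional error is the standard, unstated fix), so the elaborate truncation/concentration machinery you sketch for the per-path divergence is not needed once the instances are merged.
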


Thus, if there are parameters $C$ and $n$ such that $\CPJ(C,C/n)$ cannot be solved using $I=C/n$ communication, i.e. 
$T(C,C/n)>C/k(n)\gg C/n$, then $\CPJ(C,C/n)$ is a counterexample to direct sum, i.e. 
$$
\CFunc{\CPJ(C,I)^n} = O(C \log nC/\ve) = \tilde{O}(C) = \tilde{O}(k(n)\CFunc{\CPJ(C,C/n)}) = o(n \cdot \CFunc{\CPJ(C,C/n)}).
$$

\begin{proof}(of Theorem \ref{thm:CPJ}) 
We solve $\CPJ(C,I)^n$ by taking $m:= n\log n$ copies of the $\CPJ(C,I)$ problem representing 
$\log n$ copies of each of the $n$ instances. The players will compute all the copies in parallel 
with error $<2\ve$, and then take a majority of the $\log n$ copies for each instance. For a sufficiently large
$n$ this guarantees the correct answer for all $n$ instances except with probability $<\ve$. Thus our goal is 
to simulate $m$ copies of $\CPJ(C,I)$.
We view $\CPJ(C,I)^m$ as a degree-$2^m$, $C$-round correlated pointer jumping problem in the natural way. 
Each node represents a vector $V=(v_1,\ldots,v_m)$ of $m$ nodes in the $m$ copies of  $\CPJ(C,I)$. The children of $V$ are the 
$2^m$ possible combinations of children of $\{v_1,\ldots,v_m\}$. The distribution on the children is the product distribution 
induced by the distributions in $v_1,\ldots,v_m$. We claim that 
\begin{equation}
\label{eq:divCPJ}
\DivProt{\CPJ(C,I)^n_{v_1,\ldots,v_m}} = \sum_{i=1}^m \DivProt{\CPJ(C,I)_{v_i}}.
\end{equation}
This follows easily by induction on the tree, since the distribution on each node
is a product distribution, and for each independent pairs $(P_1,Q_1),\ldots,(P_m,Q_m)$ we 
have
$$
\Div{P_1\times P_2\times\ldots\times P_m}{Q_1\times Q_2\times\ldots\times Q_m} = 
\Div{P_1}{Q_1}+\ldots+\Div{P_m}{Q_m},
$$
by \lemmaref{lemma:divproduct}.
By applying \eqref{eq:divCPJ} to the root of the tree we see that $\DivProt{\CPJ(C,I)^m} \le m\cdot I\le C\log n$. Thus 
Theorem~\ref{theorem:pointer} implies that ${\CPJ(C,I)^n}$ can be solved with an additional error of $\ve/2$ using an expected
$$
 C\log n + C \log C/\ve + o(C\log n)
$$
bits of communication.
\end{proof}

\section{Acknowledgments}
We thank Boaz Barak and Xi Chen for useful discussions.

\bibliographystyle{alphasy1}
\bibliography{refs}

\makeatletter
\providecommand{\biblkeystyle}[1]{#1}
\providecommand{\biblbegin}{}
\providecommand{\biblend}{}
\@ifundefined{beginbibabs}{\def\beginbibabs{\begin{quotation}\noindent}
\def\endbibabs{\end{quotation}}}{}
\makeatother

\begin{thebibliography}{\biblkeystyle{BYJKS04}}
\biblbegin

\bibitem[\biblkeystyle{BBCR10}]{BarakBCR10}
B.~Barak, M.~Braverman, X.~Chen, and A.~Rao.
\newblock How to compress interactive communication.
\newblock In {\em Proceedings of the 42nd Annual ACM Symposium on Theory of
  Computing}, 2010.

\bibitem[\biblkeystyle{BYJKS04}]{BaryossefJKS04}
Z.~Bar-Yossef, T.~S. Jayram, R.~Kumar, and D.~Sivakumar.
\newblock An information statistics approach to data stream and communication
  complexity.
\newblock {\em Journal of Computer and System Sciences}, 68(4):702--732, 2004.

\bibitem[\biblkeystyle{CSWY01}]{ChakrabartiSWY01}
A.~Chakrabarti, Y.~Shi, A.~Wirth, and A.~Yao.
\newblock Informational complexity and the direct sum problem for simultaneous
  message complexity.
\newblock In B.~Werner, editor, {\em Proceedings of the 42nd Annual IEEE
  Symposium on Foundations of Computer Science}, pages 270--278, Los Alamitos,
  CA, Oct. ~14--17 2001. IEEE Computer Society.

\bibitem[\biblkeystyle{CT91}]{CoverT91}
T.~M. Cover and J.~A. Thomas.
\newblock {\em Elements of Information Theory}.
\newblock Wiley series in telecommunications. J. Wiley and Sons, New York,
  1991.

\bibitem[\biblkeystyle{Cuf08}]{Cuff08}
P.~Cuff.
\newblock Communication requirements for generating correlated random
  variables.
\newblock {\em CoRR}, abs/0805.0065, 2008.
\newblock informal publication.

\bibitem[\biblkeystyle{FKNN91}]{FederKNN95}
T.~Feder, E.~Kushilevitz, M.~Naor, and N.~Nisan.
\newblock Amortized communication complexity.
\newblock {\em SIAM Journal on Computing}, 24(4):736--750, 1995.
\newblock Prelim version by Feder, Kushilevitz, Naor FOCS 1991.

\bibitem[\biblkeystyle{HJMR07}]{HarshaJMR07}
P.~Harsha, R.~Jain, D.~A. McAllester, and J.~Radhakrishnan.
\newblock The communication complexity of correlation.
\newblock In {\em IEEE Conference on Computational Complexity}, pages 10--23.
  IEEE Computer Society, 2007.

\bibitem[\biblkeystyle{Hol07}]{Holenstein07}
T.~Holenstein.
\newblock Parallel repetition: Simplifications and the no-signaling case.
\newblock In {\em Proceedings of the 39th Annual ACM Symposium on Theory of
  Computing}, 2007.

\bibitem[\biblkeystyle{JRS03}]{JainRS03}
R.~Jain, J.~Radhakrishnan, and P.~Sen.
\newblock A direct sum theorem in communication complexity via message
  compression.
\newblock In J.~C.~M. Baeten, J.~K. Lenstra, J.~Parrow, and G.~J. Woeginger,
  editors, {\em ICALP}, volume 2719 of {\em Lecture Notes in Computer Science},
  pages 300--315. Springer, 2003.

\bibitem[\biblkeystyle{Kla10}]{Klauck10}
H.~Klauck.
\newblock A strong direct product theorem for disjointness.
\newblock In L.~J. Schulman, editor, {\em STOC}, pages 77--86. ACM, 2010.

\bibitem[\biblkeystyle{Rao08}]{Rao08a}
A.~Rao.
\newblock Parallel repetition in projection games and a concentration bound.
\newblock In {\em Proceedings of the 40th Annual ACM Symposium on Theory of
  Computing}, 2008.

\bibitem[\biblkeystyle{Sha48}]{Shannon48}
C.~E. Shannon.
\newblock A mathematical theory of communication.
\newblock {\em Bell System Technical Journal}, 27, 1948.
\newblock Monograph B-1598.

\bibitem[\biblkeystyle{Sha03}]{Shaltiel03}
R.~Shaltiel.
\newblock Towards proving strong direct product theorems.
\newblock {\em Computational Complexity}, 12(1-2):1--22, 2003.

\bibitem[\biblkeystyle{SW73}]{SlepianW73}
D.~Slepian and J.~K. Wolf.
\newblock Noiseless coding of correlated information sources.
\newblock {\em IEEE Transactions on Information Theory}, 19(4):471--480, July
  1973.

\bibitem[\biblkeystyle{Wyn75}]{Wyner75}
A.~D. Wyner.
\newblock The common information of two dependent random variables.
\newblock {\em IEEE Transactions on Information Theory}, 21(2), March 1975.

\biblend
\end{thebibliography}

\end{document}